\newcommand{\LV}[1]{#1}
\newcommand{\SV}[1]{}
\def \Ra {\Rightarrow}
\def \emptyword{\lambda}
\newcommand{\subword}{\mathrm{sub}}
\title{On Computational Completeness of Semi-Conditional Matrix Grammars}
\author{Henning Fernau \inst{1}
\and\\ Lakshmanan Kuppusamy\inst{2}\thanks{Corresponding Author}
\and\\ 
Indhumathi Raman\inst{3}
}
\authorrunning{H. Fernau, L. Kuppusamy, I. Raman}
\institute{Fachbereich 4 -- Abteilung Informatikwissenschaften, Universit\"at Trier, 54286 Trier, Germany, \email{fernau@uni-trier.de}
 \and School of Computer Science and Engineering, VIT University, Vellore - 632~014, India, \email{klakshma@vit.ac.in} \and 
 Department of Computing Technologies, SRM Institute of Science and Technology, \\ 
 Kattankulathur, Chennai - 603203, India. \email{indhumar2@srmist.edu.in}}  
\date{July 2024}
\begin{document}
\maketitle
\begin{abstract}
Matrix grammars are one of the first approaches ever proposed in regulated rewriting, prescribing that rules have to be applied in a certain order. \LV{Originally, they have been introduced by \'Abrah\'am on linguistic grounds. }In \LV{traditional }regulated rewriting, the most interesting case shows up when all rules are context-free. 
Typical descriptional complexity measures incorporate the number of nonterminals or the matrix length, i.e., the number of rules per matrix. When viewing matrices as program fragments, it becomes natural to consider additional applicability conditions for such matrices.
Here, we focus on attaching a permitting and a forbidden string to every matrix in a matrix grammar. The matrix is applicable to a sentential form~$w$ only if the permitting string is a subword in~$w$ and the forbidden string is not a subword in~$w$. We call such a grammar, where the application of a matrix is conditioned as described, a semi-conditional matrix grammar. We consider  $(1)$ the maximal lengths of permitting and forbidden strings, $(2)$ the number of nonterminals, $(3)$ the number of conditional matrices, $(4)$ the maximal length of any matrix and $(5)$ the number of conditional matrices with nonempty permitting and forbidden strings,  as the resources (descriptional complexity measures) of a semi-conditional matrix grammar.

\LV{In this paper, w}\SV{W}e show that certain semi-conditional matrix grammar families defined by restricting  resources can generate all \LV{of the }recursively enumerable languages.
\end{abstract}

\section{Introduction}

A matrix grammar (originally introduced by S.~\'Abrah\'am on linguistic grounds in~\cite{Abr65}) consists in sequences of context-free rules called matrices; when a matrix\LV{ is chosen to be} applied to a sentential form, all rules in the sequence are applied in the given order. It is often considered the most important variation of regulated rewriting\LV{ systems}. \LV{Although this type of control on the work of context-free grammars is one of the earliest extensions of context-free grammars presented in the literature~\cite{DasPau89}, matrix grammars still raise many interesting questions (some of these are discussed later in this section). }Matrix grammars with appearance checking, having \LV{(i) }three nonterminals, \LV{all of them used in appearing checking mode, (ii) four nonterminals and two of them used in appearance checking mode, }are computationally complete, i.e., they characterize $\mathrm{RE}$~\LV{\cite{Feretal07,Fer03b,FreMarPau2004}}\SV{\cite{Feretal07}}. However, the lengths of the matrices are unbounded. It is not clear how to restrict the matrix length while still bounding the number of nonterminals. It is\LV{ also} known that matrix grammars without appearance checking are not computationally complete; see~\cite{HauJan94}.

Descriptional complexity (within formal languages) focuses on the influence of syntactic parameters within automata or grammars. For matrix grammars, such parameters of interest could be (e.g.) the number of nonterminals or the number of matrices.
For formalisms characterizing~$\mathrm{RE}$, it has always been asked how small certain parameters could be while still maintaining computational completeness, see \cite{Fer2021} for a survey. We continue this line of research here.

In 1985, Gh. P\u aun introduced another variant of regulated context-free grammar called semi-conditional grammars~\cite{Pau85} where a permitting and a forbidden string, associated to each (context-free) rule, govern the applicability of said context-free rule. In~\cite{Pau85}, the author also introduced a combination of \LV{two regulation mechanisms, namely, }matrix and semi-conditional grammars, called semi-conditional matrix grammars\LV{, whose language classes were denoted as $\mathcal{KM}^\lambda(i,j)$}. T\LV{his means that t}o each matrix (containing sequences of possibly erasing context-free rules), a permitting string $w_+$ of length at most~$i$ and a forbidden string $w_-$ of length at most~$j$ are associated, and the said matrix is only applicable to the sentential form $w$ if $w_+$ is a substring of $w$ and if $w_-$ does not occur as a substring of~$w$. The ordered pair $(i,j)$ is called the \emph{degree} of the semi-conditional grammar.

\paragraph{Our contribution.} In this paper, we consider $(1)$ the degree $(i,j)$, $(2)$ the number~$n$ of nonterminals, $(3)$ the number~$m$ of conditional matrices and $(4)$ the maximal length~$\ell$ of any matrix as descriptional complexity measures of semi-conditional matrix grammars and we denote its corresponding language class by $\textrm{SCM}(i,j;n;m,\ell)$. Further, in every matrix, if either the permitting or the forbidden string is empty, then we call the semi-conditional grammar as \emph{simple} and its corresponding language class as $\textrm{SSCM}(i,j;n;m,\ell)$. In \cite{MedKop2004}, it has been proved that SSCM$(3,1;7;2,3)=\mathrm{RE}$. 
In this paper, we improve or complement this result by showing that each of  $\mathrm{SSCM}(2,1;5;3,2)$, $\mathrm{SSCM}(3,1;5;2,2)$, $\mathrm{SSCM}(3,1;4;3,3)$ is equivalent to the class $\mathrm{RE}$ of recursively enumerable languages.\LV{ In this case, we call these $\mathrm{SSCM}$ systems computationally complete.} We also show that certain semi-conditional matrix grammars with binary matrices (with matrix length at most~$2$) can generate $\mathrm{RE}$ if we relax on the simplicity condition of the matrices. Our results are surveyed in Table~\ref{table-our-results}.

\section{Definitions and Notation}

\LV{In the following, w}\SV{W}e assume the reader to be familiar with basic notions of formal languages. For a word $w\in V^*$, we call $y\in V^*$ a \emph{subword} (\LV{sometimes also called a }factor) of~$w$ if there are $x,z\in V^*$ such that $w=xyz$. Let $\subword(w)$ denote the set of all subwords of~$w$.

\LV{A semi-conditional matrix grammar  (denoted as $\mathrm{SCM}$) represents another combination of semi-conditional grammars and matrix grammars (see \cite{Pau85}).} 
\begin{definition}\label{def-SCM}
A semi-conditional matrix or $\mathrm{SCM}$-grammar~\cite{Pau85} is a quadruple $G = (V, T, M, S)$, where $V$ is an
alphabet, $T \subsetneq V$ contains the \emph{terminals}, $N\coloneqq V \setminus T$ contains the \emph{nonterminals}, 
and $S \in N$ is the \emph{start symbol}.\\ $M$ is a finite set of \emph{matrices with context conditions} which are of the form
$[(A_1 \to x_1), \dots, (A_\ell \to x_\ell),P,F]$,
where $A_i \in N$ and $x_i \in V^*$, $P,F \in V^+\cup \{\emptyset\}$.
\end{definition}
 Consider the matrix $r=[(A_1 \to x_1), \dots, (A_\ell \to x_\ell),P,F]$, the sets $P$ and $F$ are called the \emph{permitting} and \emph{forbidden} conditions,
respectively; $\ell$ is known as the \emph{length} of~$r$; $\emptyset\notin V$ is a special symbol, meaning that a condition is missing.\LV{\footnote{For $P$, the values  $P=\emptyword$ and $P=\emptyset$ are equivalently meaning that there is no permitting condition put on the current sentential form, while for $F$, the value $F=\emptyset$ means that no forbidden condition is put, while the value $F=\emptyword$ would mean that no continuation is possible at all.}} If  both $P=F=\emptyset$, then we term the matrix~$r$ as \emph{unconditional}, otherwise, we call it \emph{conditional}.
If all matrices are unconditional, then the degree of~$G$ is $(0,0)$ using the standard property that $|\emptyset| =0$. For brevity, then we simplify $[(A \to x), \emptyset, \emptyset]$ to $A \to x$ hereafter. If at least one condition equals $\emptyset$, then the matrix is termed \emph{simple}. If all matrices are simple, then the SCM grammar itself is termed as a \emph{simple SCM} grammar and is denoted as \emph{SSCM grammar}.

Let $r=[(A_1 \to x_1, \dots , A_\ell \to x_\ell),P,F]\in M$, with $x,y \in V^*$. Then $x \Rightarrow_r y$ (or simply $x \Rightarrow_G y$ or  $x \Rightarrow y$) iff (1) $P\neq\emptyset$ implies $P \in \subword(x)$ and (2) $F\neq \emptyset$ implies $F \not\in \subword(x)$, and (3) the matrix $[(A_1 \to
x_1), \dots, (A_\ell \to x_\ell)]$ is applied to $x$ to get $y$. 
This means that there are sentential forms $x=y_0,y_1,\dots,y_\ell=y$ such that $y_i$ is obtained from $y_{i-1}$ by replacing one occurrence of $A_i$ in  $y_{i-1}$ by $x_i$, or, in other words, by applying the context-free rule $A_i\to x_i$ on $y_{i-1}$, for $i=1,\dots,\ell$.
Let $\Rightarrow_G^*$ denote the transitive and reflexive closure of $\Rightarrow_G$. The language of $G$, denoted as $L(G)$, is defined as $L(G) = \{y\in T^*\mid S \Rightarrow_G^* y\}$.

An $\mathrm{(S)SCM}$-grammar is said to be of \emph{degree} $(i,j)$, where $i, j \in \mathbb{N}$, if in every matrix rule $\{[(A_1 \to x_1), \dots, (A_\ell \to x_\ell)], \alpha, \beta\}$ of $M$ we have $|\alpha| \leq i$ and $|\beta| \leq j$. 
We denote by $\mathrm{(S)SCM}(i,j;n;m,\ell)$, a family of languages generated by $\mathrm{(S)SCM}$-grammars, where 
\begin{itemize}
    \item $(i,j)$ is upper-bounding their degree,
    \item $n$ is an upper bound on the number of nonterminals, 
    \item $m$ is an upper bound on the number of conditional matrices,
    \item $\ell$ is an upper bound on the number of rules in any matrix, i.e., upper-bounding the \emph{length} of any matrix.
\end{itemize}
\LV{If every matrix in a $\mathrm{(S)SCM}$-grammar has exactly one rule (that is, the maximum length~$\ell$ of any matrix is $1$), then the $\mathrm{(S)SCM}$ grammar corresponds to a (simple) semi-conditional ($\mathrm{(S)SC}$) grammar, which clearly means that $\mathrm{(S)SC}(i,j;n,m)$ equals $\mathrm{(S)SCM}(i,j;n,m,1)$. For comparison, some of the best results in this context are also mentioned in Table~\ref{table-our-results}.}

The only existing result in the domain of $\mathrm{SSCM}$ grammars is (see \cite[Theorem~3]{MedKop2004}) that $\mathrm{SSCM}(3,1;7;2,3)=\mathrm{RE}$. In fact, that result was stated a bit differently, because the authors called $\max(i,j)$ the degree of an $\mathrm{SSCM}$-grammar.

It seems to be very difficult to find computational completeness results for $\mathrm{SSCM}$-grammars with at most four nonterminals and matrix length at most two\LV{, keeping the other parameters small}.
As a restriction to binary matrices is (otherwise) a quite common normal form for matrix grammar\LV{ variation}s, we now allow ourselves to drop the simplicity condition. 
We are not aware of any other \LV{previous }descriptional complexity results \LV{concerning}\SV{for} $\mathrm{SCM}$  grammars.
In order to quantify of how much we violate the simplicity condition, \LV{in addition to the parameters described so far, }we account for the number of non-simple matrices $s$ as a sixth dimension in our notation. 
In this sense, we note that $\mathrm{SSCM}(i,j;n;m,\ell)=\mathrm{SCM}(i,j;n;m,\ell,0)$. In particular, we could refer to Theorem~\ref{thm:sscm-21532} in this new notation as $\mathrm{SCM}(2,1;5;3,2,0)=\mathrm{RE}$. This way of accounting for non-simple rules was introduced\LV{ into the area of descriptional complexity of rewriting grammars} in~\cite{FerKupRam2024a}. \LV{With this new notation, t}\SV{T}he results of this paper are tabulated in Table~\ref{table-our-results}. 
For comparison, we also include some results from the literature. \LV{These earlier results were written with different terminology. Especially, i}\SV{I}f no permitting context conditions are present, we arrive at a degree of $(0,j)$,\LV{ leading to particular}\SV{ i.e.,} generalized forbidding matrix grammars \cite{FerKupRam2024,Med90a}.

\begin{table}[tb]
    \begin{center}
\scalebox{0.95}{      \begin{tabular}{|p{1.0cm}|p{1.45cm}|p{2.05cm}|p{1.75cm}|p{1.99cm}|p{1.99cm}|}
\hline
Degree $(i,j)$ &\small$\#$ Nonterminals $n$ &$\#$ Conditional Matrices $m$ &\small \mbox{Max.\,Matrix} Length $\ell$ &\small \mbox{$\#$~Non-}Simple rules $s$ &Reference\\
\hline
(3,1)&7&2&3&0&\cite{MedKop2004}, Thm.~3\\
\hline\LV{\hline}\LV{\hline}
(2,1)&5&3&2&0&Thm. \ref{thm:sscm-21532}\\ \hline
(3,1)&5&2&2&0&Thm. \ref{thm:sscm-31522}\\ \hline
(3,1)&4&3&3&0&Thm. \ref{thm:sscm-31433}\\ \hline
(4,3)&4&7&2&6&Thm. \ref{thm:scm-434726}\\ \hline
(5,2)&4&7&2&4&Thm. \ref{thm:scm-524724}\\ \hline
(6,3)&4&7&2&3&Thm. \ref{thm:scm-634723}\\ \hline
(6,3)&3&*&2&4 &Thm. \ref{thm:scm-633-24}\\ \hline
(7,2)&3&*&2&3 &Thm. \ref{thm:scm-723-23}\\ \hline \LV{\hline}\LV{\hline}
(0,2)&*&*&3&0 &  \cite{Pau85}, Thm.~4.4\\ \hline
(0,1)&*&*&2&0 & \cite{FerKupRam2024}, Thm.~2\\\hline
(0,*) & 3 & * & * & 0 & Thm. \ref{thm:sscm-0-3--}\\
\hline\LV{\hline}\LV{\hline}
\LV{(1,1) &*&*&1&0 & \cite{Mas2009a}, Cor.~1\\\hline}
\LV{(2,1) &9&8&1&0 & \cite{FerKOR2021a}, Thm. 3.1\\\hline}
\LV{(2,1) &7&6&1&3 & \cite{FerKupRam2024a}, Thm. 4\\\hline}
\LV{(3,1) & 7 & 7 & 1 & 0 & \cite{FerKOR2021a}, Thm. 3.2\\\hline}
\LV{(3,1) & 6 & 6 & 1 & 4 & \cite{FerKupRam2024a}, Thm. 3\\\hline}
\LV{(6,3) & 4 & 18 & 1 & 15 & \cite{FerKupRam2024a}, Thm. 1\\\hline}
\end{tabular}}
\caption{Results of this paper, plus predecessor results: $\mathrm{SCM}(i,j;n;m,l,s) = \mathrm{RE}$. The $*$ denotes that the respective parameter is unbounded.}
\label{table-our-results}
    \end{center}
\end{table}

\section{Normal Forms for Type-0 Grammars}

In \cite{FerKupRam2024,Gef91a,MasMed2007a}, different normal forms for type-0 grammars have been described. They all present grammars that contain very few nonterminals and are still able to generate every $\mathrm{RE}$ language.
\LV{They can be best explained by first introducing a normal form that is possibly the best known of Geffert's results from~\cite{Gef91a}.}

\begin{proposition}\label{propos:(5,2)-GNF}
For each $\mathrm{RE}$ language~$L$, $L\subseteq T^*$, there is a type-0 grammar of the form $G= (V, T, P\cup \{AB\to \emptyword, CD\to \emptyword\}, S)$ with $L(G)=L$, where~$P$ contains only context-free rules and $N\coloneqq V\setminus T=\{S, A, B, C,D\}$. \SV{$P$-rules are:}\LV{More specifically, the context-free rules are of the forms}
\begin{enumerate}
    \item \label{stage 1} $S\to uSa$ with $u\in\{A,C\}^+$ and $a\in T$,
    \item \label{stage 2} $S\to uSv$ or $S\to uv$  with $u\in\{A,C\}^+$ and $v\in \{B,D\}^*$.
\end{enumerate}
\end{proposition}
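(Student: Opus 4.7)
Since this is Geffert's classical ``$(5,2)$-normal form'' \cite{Gef91a}, my plan is to reconstruct the standard argument rather than to invent a new one. First, I would reduce to a convenient starting point: every $\mathrm{RE}$ language is generated by some type-0 grammar $G_0$ in Kuroda normal form, in which the only non-context-free rules have the shape $XY \to ZW$; a routine further simplification rewrites these as swap rules $XY \to YX$, at the cost of a few additional auxiliary nonterminals. The proposition then follows by showing how such a restricted grammar can be simulated using only the nonterminals $\{S, A, B, C, D\}$ together with the two cancellation rules $AB \to \emptyword$ and $CD \to \emptyword$.

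The encoding I would use treats $S$ as a boundary between a ``program tape'' in $\{A, C\}^+$ on the left and a ``data tape'' in $\{B, D\}^*$ on the right, with the terminal suffix accumulating further to the right. The stage-$1$ rules $S \to uSa$ spell out a candidate terminal word $a_1 \cdots a_k$ from right to left, while simultaneously prepending $\{A, C\}^+$-blocks that nondeterministically encode a derivation of $G_0$. The stage-$2$ rules $S \to uSv$ and $S \to uv$ then add any remaining program material on the left, data material immediately on the right of $S$, and eventually erase $S$. Crucially, in a successful derivation the stage-$1$ rules must be applied before the stage-$2$ rules, because a terminal appearing to the right of $S$ would later block all cancellations from crossing it. After all context-free rewriting, the sentential form has the shape $\alpha \beta\, a_k \cdots a_1$ with $\alpha \in \{A, C\}^+$ and $\beta \in \{B, D\}^*$, so the two cancellation rules can act only at the $\alpha/\beta$ interface, popping one symbol from each side in lockstep. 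This is exactly the behaviour of a two-pushdown machine, which is Turing-complete, so a suitable block-coding of $G_0$'s productions into $\{A, C\}$ (for $\alpha$) and $\{B, D\}$ (for $\beta$) yields the desired simulation.

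The main obstacle is soundness: I have to rule out spurious derivations that reach a terminal string via an ``unintended'' interleaving of cancellations or of the two context-free stages. The standard fix is to choose the block-encoding so that any successful sequence of $AB$- and $CD$-cancellations is forced to correspond, symbol by symbol, to a legitimate sequence of $G_0$-productions; any mismatch leaves a residual $A$, $B$, $C$, or $D$ that can never be erased, and thus never yields a terminal string. Since the result is already in the literature, in the actual writeup I would simply invoke \cite{Gef91a} and refer the reader there for the technical details of the encoding.
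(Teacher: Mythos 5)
The paper offers no proof of this proposition at all: it is quoted as Geffert's well-known $(5,2)$ normal form and attributed to~\cite{Gef91a}, which is exactly what you propose to do, so your treatment matches the paper's. Your sketch of the underlying construction (nondeterministically generate an encoded derivation in the context-free stages, then verify it by the cancellations $AB\to\emptyword$, $CD\to\emptyword$ at the $\{A,C\}^*/\{B,D\}^*$ interface, with mixed-stage derivations leaving unerasable symbols) is consistent with Geffert's argument — only the ``two-pushdown machine'' framing is loose, since the deletion phase has no finite control and really performs an equality check on the guessed encoding — but since you defer the technical details to~\cite{Gef91a}, this is adequate.
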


As this normal form uses five nonterminals and two non-context-free rules, we call it $(5,2)$-GNF for short.
Derivations are \LV{supposed to be }performed in three stages: \begin{itemize}
    \item In Stage~1, a terminal suffix is created that will be (finally) the terminal word that is produced. Here, only context-free rules from Item~\ref{stage 1} are used.
    \item In Stage~2, only context-free rules from Item~\ref{stage 2} are employed.
    \item Finally in Stage~3, the non-context-free deletion rules come into play. 
\end{itemize}
Sometimes Stages~1 and~2 are together addressed as Phase~1, so that Stage~3 is Phase~2.
Unfortunately, it is also possible to mix the first two stages. Geffert~\cite{Gef91a} could prove that such mixtures can never lead to terminal strings, but when we simulate normal form grammars in the following, we must also consider the corresponding sentential forms.
\LV{Nonetheless, the structure of sentential forms~$w$ derivable from~$S$ is very clear:}\SV{Any sentential form~$w$ derivable from~$S$ is from}\LV{ $w$ must belong to}
\begin{equation}\label{eq:SF-GNF-(5,2)}
 L_{(5,2)}\coloneqq \{A,C\}^*\{S,\emptyword\}(\{B,D\}^*\cup T)^*\,.
\end{equation}

Based on $(5,2)$-GNF, other normal forms can be derived that basically differ in choosing different encodings of the nonterminal parts.
For instance, by applying the morphism defined by $A\mapsto CAA$, $B\mapsto BBC$, $C\mapsto CA$, $D\mapsto BC$, $x\mapsto x$ for $x\in T\cup\{S\}$ to the right-hand side of the context-free rules, one arrives at:

\begin{proposition}\label{propos:(4,2)-GNF}
For each $\mathrm{RE}$ language~$L$, $L\subseteq T^*$, there is a type-0 grammar of the form $G= (V, T, P\cup \{AB\to \emptyword, CC\to \emptyword\}, S)$ with $L(G)=L$, where~$P$ contains only context-free rules and $N\coloneqq V\setminus T=\{S, A, B, C\}$. \SV{$P$-rules look like}\LV{More specifically, the context-free rules are of the forms}
\begin{enumerate}
    \item  $S\to uSa$ with $u\in\{CA,CAA\}^+$ and $a\in T$,
    \item  $S\to uSv$ or $S\to uv$  with $u\in\{CA,CAA\}^+$ and $v\in \{BC,BBC\}^*$.
\end{enumerate}
\end{proposition}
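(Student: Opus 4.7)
The plan is to derive the claimed normal form from the $(5,2)$-GNF of Proposition~\ref{propos:(5,2)-GNF} via the morphism $h$ defined by $A\mapsto CAA$, $B\mapsto BBC$, $C\mapsto CA$, $D\mapsto BC$, and $h(x)=x$ for $x\in T\cup\{S\}$. Given a type-0 grammar $G=(V,T,P\cup\{AB\to\emptyword, CD\to\emptyword\},S)$ in $(5,2)$-GNF, I would construct $G'=(V',T,P'\cup\{AB\to\emptyword, CC\to\emptyword\},S)$ where $V'\setminus T=\{S,A,B,C\}$ and $P'=\{S\to h(\alpha)\mid (S\to\alpha)\in P\}$. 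Since $h$ sends $\{A,C\}^+$ into $\{CA,CAA\}^+$ and $\{B,D\}^*$ into $\{BC,BBC\}^*$ while fixing $S$ and every terminal, applying $h$ to the right-hand sides in items~\ref{stage 1} and~\ref{stage 2} yields precisely the right-hand sides listed in the statement; the non-context-free rules $AB\to\emptyword$ and $CC\to\emptyword$ are chosen so that the images $h(AB)=CAABBC$ and $h(CD)=CABC$ reduce to~$\emptyword$, as verified next.

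For the inclusion $L(G)\subseteq L(G')$, I would simulate each derivation step of $G$ in $G'$. A context-free step $S\Rightarrow_G\alpha$ is mirrored by $S\Rightarrow_{G'}h(\alpha)$. For the deletion $uABv\Rightarrow_G uv$, the segment $h(u)\cdot CAABBC\cdot h(v)$ reduces in $G'$ in three steps: one $AB\to\emptyword$ turning $CAABBC$ into $CABC$, another $AB\to\emptyword$ turning it into $CC$, and a final $CC\to\emptyword$, yielding $h(uv)$. Likewise, $uCDv\Rightarrow_G uv$ is simulated in two steps starting from $h(u)\cdot CABC\cdot h(v)$.

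The harder direction, $L(G')\subseteq L(G)$, I would establish by a structural invariant: every sentential form $w$ occurring in a terminating derivation of $G'$ admits a unique decomposition $w=h(w')$ for some $w'\in L_{(5,2)}$ (cf.~\eqref{eq:SF-GNF-(5,2)}), with the possible exception of transient forms of the shape $h(u)\cdot z\cdot h(v)$ produced midway through a simulated deletion, where $z\in\{CABC, CC\}$ bridges an $\{A,C\}$-block and a $\{B,D\}$-block of~$w'$. Granting the invariant, the CF rules of $G'$ correspond one-to-one with CF rules of $G$ via~$h$, and any completed run of deletions collapses exactly one $CAABBC$ or $CABC$ pattern, mirroring one application of $AB\to\emptyword$ or $CD\to\emptyword$ in~$G$. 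Grouping the $G'$-derivation into such macro-steps produces a derivation of~$G$ reaching the same terminal word.

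The main obstacle is establishing this invariant, which boils down to ruling out ``cross-block'' deletions. A finite case analysis over adjacent block pairs $h(X)h(Y)$ from the decomposition (with $X,Y$ ranging over $\{S,A,B,C,D\}\cup T$) shows that the only $AB$-substrings that can arise in $h(L_{(5,2)})$ sit inside $h(A)h(B)=CAABBC$ or $h(C)h(D)=CABC$ (adjacencies with $S$ or with terminals never create an $AB$), and no $CC$ occurs in a fully reduced block sequence at all. Any attempted $AB$-deletion straddling blocks that correspond to non-deletable old pairs such as $CB$ or $AD$ produces a residue (e.g., $CBC$ from $h(C)h(B)=CABBC$, or $CAC$ from $h(A)h(D)=CAABC$) that contains no further $AB$ or $CC$ and is therefore stuck; similarly, an ill-timed $CC$-deletion leaves a block count that can no longer be completed to $h(w')$ for any terminal~$w'$. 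These checks are combinatorially routine but must be carried out exhaustively to secure correctness.
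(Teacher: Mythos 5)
Your proposal is correct and takes essentially the same route as the paper: the paper derives the $(4,2)$-GNF from the $(5,2)$-GNF precisely by the morphism $A\mapsto CAA$, $B\mapsto BBC$, $C\mapsto CA$, $D\mapsto BC$ (fixing $S$ and the terminals), stating the result without further proof. Your two-directional simulation and the cross-block case analysis (showing that ill-placed deletions, e.g.\ inside $h(C)h(B)=CABBC$ or $h(A)h(D)=CAABC$, leave stuck residues) merely make explicit what the paper leaves implicit.
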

To differentiate from the previously stated normal form, we refer to this one as  $(4,2)$-GNF for short. Sentential forms~$w$ of this type of grammar belong to:
\begin{equation}\label{eq:SF-GNF-(4,2)}
 L_{(4,2)}\coloneqq \{CA,CAA\}^*\{S,\emptyword,CC\}(\{BC,BBC\}^*\cup T)^*\,.
\end{equation}
Observe that as long as $S$ is present (i.e., in Stages~1 or~2), the strings $CC$ or $AB$ cannot occur as subwords in any sentential form that is derivable from~$S$. Moreover, $BA$ is never a subsequence of any sentential form derivable from~$S$.

Another popular normal form can be called $(3,2)$-GNF. It is derived from $(5,2)$-GNF by the morphism $A\mapsto ABB$, $B\mapsto BA$, $C\mapsto AB$, $D\mapsto BBA$, $x\mapsto x$ for $x\in T\cup\{S\}$ to the right-hand side of the context-free rules. This yields:
\begin{proposition}\label{propos:(3,2)-GNF}
For each $\mathrm{RE}$ language~$L$, $L\subseteq T^*$, there is a type-0 grammar of the form $G= (V, T, P\cup \{AA\to \emptyword, BBB\to \emptyword\}, S)$ with $L(G)=L$, where~$P$ contains only context-free rules and $N\coloneqq V\setminus T=\{S, A, B\}$. \SV{$P$-rules look like}\LV{More specifically, the context-free rules are of the forms}
\begin{enumerate}
    \item  $S\to uSa$ with $u\in\{ABB,AB\}^+$ and $a\in T$,
    \item  $S\to uSv$ or $S\to uv$  with $u\in\{ABB,AB\}^+$ and $v\in \{BA,BBA\}^*$.
\end{enumerate}
\end{proposition}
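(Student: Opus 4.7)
The plan is to derive the grammar $G'$ in $(3,2)$-GNF directly from a grammar $G$ in $(5,2)$-GNF (Proposition~\ref{propos:(5,2)-GNF}) for $L$ by applying the morphism $h$ defined by $h(A)=ABB$, $h(B)=BA$, $h(C)=AB$, $h(D)=BBA$, and $h(x)=x$ for $x\in T\cup\{S\}$ to the right-hand sides of the context-free rules of $G$, and by replacing the two non-context-free rules of $G$ with $AA\to\emptyword$ and $BBB\to\emptyword$. Note that $h(A)\in\{ABB,AB\}^+$ corresponds to the first alternative in both items of the claim, $h(C)\in\{ABB,AB\}^+$ to the second, and similarly $h(B),h(D)\in\{BA,BBA\}^+$ yield the right-hand sides in the $v$ component; in particular, the resulting rules have exactly the shape asserted in the proposition.

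For the inclusion $L(G)\subseteq L(G')$, I would proceed by induction on the length of a derivation $S\Rightarrow_G^* w$ and show $S\Rightarrow_{G'}^* h(w)$. The only non-context-free steps to check are applications of $AB\to\emptyword$ and $CD\to\emptyword$. Both $h(AB)$ and $h(CD)$ evaluate to $ABBBA$, which is indeed reducible to $\emptyword$ in $G'$ via one application of $BBB\to\emptyword$ followed by one application of $AA\to\emptyword$. All context-free steps of $G$ are simulated by the image rules of $G'$ in a single step.

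For the reverse inclusion, I would exploit a structural invariant in the spirit of~(\ref{eq:SF-GNF-(5,2)}): every sentential form reachable from~$S$ in $G'$ can be uniquely factorized as $h(\alpha)\cdot\sigma\cdot h(\beta)\cdot t$ with $\alpha\in\{A,C\}^*$, $\sigma\in\{S,\emptyword\}$, $\beta\in\{B,D\}^*$ and $t\in T^*$, plus possibly a ``half-finished'' deletion of $h(AB)=h(CD)=ABBBA$ straddling the boundary between $h(\alpha)$ and $h(\beta)$. The key observation driving this invariant is that inside any block $h(A)h(C)\cdots h(A)h(C)$ the symbol $A$ is always followed by at least one $B$ (and similarly, inside $h(B)h(D)\cdots$, every $A$ is preceded by a $B$), so that $AA$ can appear only where two blocks meet after the intermediate $B$'s have been removed, and $BBB$ can appear only at the interface created by adjacent $h(\text{left})\,h(\text{right})$. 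Hence every application of $AA\to\emptyword$ or $BBB\to\emptyword$ in any derivation of $G'$ that ends in a terminal string corresponds to progress along a legitimate deletion of $h(AB)$ or $h(CD)$ at the left/right boundary, and can therefore be matched by a non-context-free step of $G$ on the decoded sentential form.

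The main obstacle is this last argument: ruling out ``spurious'' deletions that cut across the encoding, i.e., showing that any attempt to apply $AA\to\emptyword$ or $BBB\to\emptyword$ at a position that does not correspond to the expected interface leaves the string in a form from which no terminal word can be produced. I would handle this by tracking the multiset of $A$'s versus $B$'s modulo the structure of the encoding and observing that, in every block $ABB$, $AB$, $BA$, or $BBA$, the $A$'s and $B$'s come in rigidly fixed local patterns; any off-boundary deletion destroys this balance and leaves a residue of $A$'s and $B$'s that cannot be removed by the available rules, since the context-free rules only add letters according to the image of $h$. Once this structural lemma is in place, decoding sentential forms of $G'$ back to sentential forms of $G$ becomes routine, completing $L(G')\subseteq L(G)=L$.
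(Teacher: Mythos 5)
Your construction is exactly the paper's: Proposition~\ref{propos:(3,2)-GNF} is obtained there by applying the very same morphism $A\mapsto ABB$, $B\mapsto BA$, $C\mapsto AB$, $D\mapsto BBA$ to the right-hand sides of a $(5,2)$-GNF grammar and replacing $AB\to\emptyword$, $CD\to\emptyword$ by $BBB\to\emptyword$, $AA\to\emptyword$, the paper stating the result without further proof. Your added correctness sketch (both $h(AB)=h(CD)=ABBBA$ reduce via $BBB\to\emptyword$ then $AA\to\emptyword$, while the only cross-boundary deletion, from $h(A)h(D)=ABBBBA$, leaves the dead residue $ABA$) is sound and matches the invariant the paper records in Eq.~\eqref{eq:SF-GNF-(3,2)}, whose center alternatives $AA$ and $ABA$ are precisely your half-finished and stuck configurations.
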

\noindent
Sentential forms~$w$ of this type of grammar belong to:
\begin{equation}\label{eq:SF-GNF-(3,2)}
 L_{(3,2)}\coloneqq \{ABB,AB\}^*\{S,\emptyword,AA,ABA\}(\{BA,BBA\}^*\cup T)^*\,.
\end{equation}

Masopust and Meduna had a slightly different encoding idea in \cite{MasMed2007a}: keeping a middle marker $\$$ in the string allows encoding with only two different symbols ``elsewhere''. Now, to the deletion rules $AB\to\emptyword$ and $CD\to\emptyword$, resp., there correspond \emph{shrinking rules} $0\$0\to \$$ and $1\$1\to \$$, resp., and (only) finally $\$\to\emptyword$. We call the resulting normal form Masopust-Meduna normal form\LV{, or MMNF for short.
 This yields:}\SV{ (MMNF).}
\begin{proposition}\label{propos:MMNF}
For each $\mathrm{RE}$ language~$L$, $L\subseteq T^*$, there is a type-0 grammar of the form $G= (V, T, P\cup \{0\$0\to \$, 1\$1\to \$,\$\to\emptyword\}, S)$ with $L(G)=L$, where~$P$ contains only context-free rules and $N\coloneqq V\setminus T=\{S, 0, 1,\$\}$. \SV{$P$-rules look like}\LV{More specifically, the context-free rules are of the forms}
\begin{enumerate}
    \item  $S\to uSa$ with $u\in\{0,1\}^+$ and $a\in T$,
    \item  $S\to uSv$ or $S\to u\$v$  with $u\in\{0,1\}^+$ and $v\in \{0,1\}^*$.
\end{enumerate}
\end{proposition}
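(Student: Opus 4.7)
The plan is to derive Proposition~\ref{propos:MMNF} from the $(5,2)$-GNF (Proposition~\ref{propos:(5,2)-GNF}), in the same spirit as the morphism-based derivations used for Propositions~\ref{propos:(4,2)-GNF} and~\ref{propos:(3,2)-GNF}. Starting from a type-0 grammar $G_0=(V_0,T,P_0\cup\{AB\to\emptyword,CD\to\emptyword\},S)$ in $(5,2)$-GNF, I would define a homomorphism $h$ on $V_0$ by $h(A)=h(B)=0$, $h(C)=h(D)=1$, and $h(x)=x$ for $x\in T\cup\{S\}$. Then I would construct $G=(V,T,P\cup\{0\$0\to\$,1\$1\to\$,\$\to\emptyword\},S)$ with $V=T\cup\{S,0,1,\$\}$, putting into $P$ the rule $S\to h(u)Sa$ for each rule $S\to uSa$ of $P_0$, the rule $S\to h(u)Sh(v)$ for each rule $S\to uSv$ of $P_0$, and the rule $S\to h(u)\$h(v)$ for each rule $S\to uv$ of $P_0$. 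By construction the rules of $P$ have the forms required by the proposition, so it suffices to prove $L(G)=L(G_0)$.

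For the inclusion $L(G_0)\subseteq L(G)$, I would take a successful derivation $S\Rightarrow^*_{G_0}w\in T^*$ and build a parallel derivation in $G$ by applying the $h$-image of each context-free step, then replacing each application of $AB\to\emptyword$ (resp.\ $CD\to\emptyword$) by $0\$0\to\$$ (resp.\ $1\$1\to\$$), and concluding with a single application of $\$\to\emptyword$. The crucial invariant is that, after $S$ is rewritten via $S\to h(u)\$h(v)$, the symbol $\$$ occupies exactly the position previously held by~$S$, so by~(\ref{eq:SF-GNF-(5,2)}) the letters to its left are the $h$-images of $\{A,C\}^*$ and those to its right are the $h$-images of $\{B,D\}^*$. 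Consequently the GNF boundary pattern $AB$ (respectively $CD$) that enables a deletion in~$G_0$ translates precisely to an occurrence of $0\$0$ (resp.\ $1\$1$) in~$G$.

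For the converse inclusion $L(G)\subseteq L(G_0)$, note that in any derivation $S\Rightarrow^*_G w\in T^*$ all context-free rules, which require $S$, must precede the unique application of $S\to h(u)\$h(v)$ that both eliminates~$S$ and introduces~$\$$; the shrinking rules then act on the sole~$\$$; and $\$\to\emptyword$ finally removes it. Decoding each sentential form via the map that sends $0\mapsto A$ and $1\mapsto C$ on the left of~$\$$, and $0\mapsto B$ and $1\mapsto D$ on its right, turns it into a $(5,2)$-GNF sentential form, and each rule step of $G$ becomes the image of a rule step of $G_0$ (with $\$\to\emptyword$ absorbed as a trivial clean-up), producing the same terminal word~$w$.

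The main obstacle is that the morphism $h$ is not injective, since $A,B$ both map to $0$ and $C,D$ both map to $1$; this is precisely why the marker~$\$$ is indispensable, as it records the former location of~$S$ and thereby lets the position of a letter relative to~$\$$ determine its pre-image under $h$. A secondary point to verify is that \emph{bad} interleavings of Stage~1 and Stage~2 rules in $G$ cannot yield terminal words, but this reduces to Geffert's analogous observation for $G_0$ via the same decoding, since any such interleaving would strand at least one non-terminal letter on the terminal side of the sentential form, which neither a shrinking rule nor $\$\to\emptyword$ can ever remove.
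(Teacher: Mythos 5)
Your construction is precisely the encoding the paper itself sketches when introducing MMNF (citing Masopust and Meduna): $A,B\mapsto 0$, $C,D\mapsto 1$, with $\$$ taking over the former position of~$S$, the deletions $AB\to\emptyword$, $CD\to\emptyword$ becoming the shrinking rules $0\$0\to\$$, $1\$1\to\$$, and a single final $\$\to\emptyword$. The paper offers no more than this sketch, and your elaboration (including the position-relative decoding for $L(G)\subseteq L(G_0)$, which makes the appeal to Geffert's interleaving analysis strictly speaking unnecessary) fills it in correctly along the same lines.
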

\noindent
Sentential forms~$w$ of this type of grammar belong to:
\begin{equation}\label{eq:SF-MMNF}
 L_{\text{MM}}\coloneqq \{0,1\}^*\{S,\emptyword,\$\}(\{0,1\}^*\cup T)^*\,.
\end{equation}
\LV{In fact, o}\SV{O}ther encodings are possible for this strategy, for instance, by requiring $u\in\{10,100\}^+$ and $v\in\{01,001\}^*$, one gets the property that the penultimate rule to be applied is $1\$1\to\$$. Such encodings give additional structure, as actually used in this paper. To differentiate this from MMNF\LV{ as introduced above}, let us call it \emph{strong} MMNF \SV{(sMMNF)}\LV{, or sMMNF for short}. Sentential forms~$w$ of this type of grammar belong to:
\begin{equation}\label{eq:SF-sMMNF}
 L_{\text{sMM}}\coloneqq \{10,100\}^*\{S,\emptyword,\$\}(\{01,001\}^*\cup T)^*\,.
\end{equation}

However, we are now giving a simplified version of what we called \emph{modified MMNF} in~\cite{FerKupRam2024} (or MMMNF for short) as follows.
\begin{proposition}\label{propos:MMMNF}
For each $\mathrm{RE}$ language~$L$, $L\subseteq T^*$, there is a type-0 grammar of the form $G= (V, T, P\cup \{0\$1\to \$, 1\$0\to \$,\$\to\emptyword\}, S)$ with $L(G)=L$, where~$P$ contains only context-free rules and $N\coloneqq V\setminus T=\{S, 0, 1,\$\}$.
\LV{More specifically, the context-free}\SV{The} rules in~$P$ are \LV{of the forms }as in Propos.~\ref{propos:MMNF}. Sentential forms of this type of grammar \LV{belong to}\SV{are in}~$L_{\text{MM}}$.
\end{proposition}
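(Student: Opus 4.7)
\medskip
\noindent\textbf{Proof plan for Proposition~\ref{propos:MMMNF}.} My plan is to \emph{reduce} MMMNF to MMNF by a simple bit-flipping encoding. Let $L\subseteq T^*$ be RE. By Proposition~\ref{propos:MMNF} there is a grammar $G'=(V,T,P'\cup\{0\$0\to\$,\,1\$1\to\$,\,\$\to\emptyword\},S)$ in MMNF with $L(G')=L$. Extend the bit-complement map $\bar{0}=1,\bar{1}=0$ to a morphism on $\{0,1\}^*$ by $\overline{b_1\cdots b_k}=\bar b_1\cdots\bar b_k$. The intuition is that in MMNF the suffix $v$ occurring to the right of $S$ or $\$$ must (by the end of Stage~2) equal the \emph{reverse} of the prefix $u$ sitting to the left, since only \emph{matching} bits on either side of $\$$ can be deleted; if instead we complement everything that lives to the right of $S/\$$, then the correct pairs will now be \emph{complementary}, which is exactly what the MMMNF shrinking rules $0\$1\to\$$ and $1\$0\to\$$ match.

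Concretely, I construct $G=(V,T,P\cup\{0\$1\to\$,\,1\$0\to\$,\,\$\to\emptyword\},S)$ by transforming each context-free rule of $P'$ as follows. Rules of the form $S\to uSa$ with $u\in\{0,1\}^+$ and $a\in T$ carry no bits to the right of $S$ and are kept verbatim. Rules of the form $S\to uSv$ with $u\in\{0,1\}^+$, $v\in\{0,1\}^*$ are replaced by $S\to uS\bar v$; rules of the form $S\to u\$v$ are replaced by $S\to u\$\bar v$. Since $\bar v\in\{0,1\}^*$, the set $P$ still has exactly the two forms listed in Proposition~\ref{propos:MMNF}. The set $N$ of nonterminals is unchanged.

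Correctness goes through a step-by-step bijection between derivations. Every sentential form of either grammar lies in $L_{\text{MM}}$, i.e.\ is of the shape $\alpha X\gamma$ with $X\in\{S,\emptyword,\$\}$, $\alpha\in\{0,1\}^*$ and $\gamma\in(\{0,1\}^*\cup T)^*$; define $\varphi(\alpha X\gamma)$ by complementing all $0/1$-symbols in~$\gamma$ while leaving~$\alpha$ and the terminals in~$\gamma$ untouched. Then $\varphi$ is an involution that maps the start symbol~$S$ to itself. I verify that for each production~$\pi$ of $G'$ there is a corresponding production $\pi'$ of $G$ such that whenever $y_1\Rightarrow_\pi y_2$ in $G'$ we have $\varphi(y_1)\Rightarrow_{\pi'}\varphi(y_2)$ in $G$, and conversely. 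For a Phase~1 rule $S\to uSa$ this is immediate; for a Phase~2 rule $S\to uSv$ or $S\to u\$v$ the matching production in $G$ is the one with $v$ replaced by $\bar v$, and the complementation of the right part works out because $\bar{\bar v\cdot\gamma}=v\cdot\bar\gamma$. The key step is the shrinking phase: a local reduction $x_1\,b\,\$\,b\,x_2\Rightarrow x_1\,\$\,x_2$ in $G'$ (with $b\in\{0,1\}$) is transported by $\varphi$ to $x_1\,b\,\$\,\bar b\,\bar{x_2}\Rightarrow x_1\,\$\,\bar{x_2}$, which is exactly an instance of $b\$\bar b\to\$$, i.e.\ of the MMMNF rule $0\$1\to\$$ or $1\$0\to\$$. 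The final step $\$\to\emptyword$ is common to both grammars. The derivation succeeds in $G'$ iff it succeeds in $G$ and both yield the same terminal word (terminals are fixed points of $\varphi$), so $L(G)=L(G')=L$.

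The only subtlety, which I view as the main obstacle to write out carefully, is to pin down the invariant that in every reachable sentential form of $G$ (resp.~$G'$) exactly one of $S,\$$ occurs (until the last step) and that the $0/1$-symbols appearing to its right are contiguous and precede the already-produced terminal suffix; this guarantees that $\varphi$ is well-defined, involutive, and that the local shrinking rules line up as described. Once this invariant is established by a straightforward induction on derivation length (using the shape of $L_{\text{MM}}$), the correspondence above yields the claim.
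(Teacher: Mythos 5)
Your reduction is correct and is in the same spirit as the paper's treatment of this proposition: the paper derives all of its normal forms by re-encoding arguments (morphisms applied to right-hand sides, or the middle-marker encoding) and for Proposition~\ref{propos:MMMNF} it gives no written proof at all, merely presenting it as a simplified version of the modified MMNF of~\cite{FerKupRam2024}; complementing the bits that end up to the right of $S$ or $\$$ in an MMNF grammar is precisely such an encoding change, and your rule-by-rule correspondence via the involution $\varphi$ establishes it cleanly. Two small touch-ups. First, the invariant you single out as the main obstacle is in part false: since Stages~1 and~2 can be mixed (as the paper itself stresses), the $0/1$-symbols to the right of $S$ need not be contiguous, nor need they all precede the terminal suffix. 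Fortunately it is also unnecessary: $\varphi$ only needs that every reachable sentential form (before $\$\to\emptyword$ is used) contains exactly one occurrence of $S$ or $\$$, and since complementation is applied occurrence-wise to the bits right of that marker, the matching of $S\to uS\bar v$, $S\to u\$\bar v$ and of the local steps $b\$\bar b\to\$$ versus $b\$b\to\$$ is unaffected by any interleaving with terminals. Second, after $\$\to\emptyword$ the split point is lost and $\varphi$ is no longer defined; but any such form that still contains a $0$ or $1$ is dead in both grammars (no rule applies without $S$ or $\$$), and if it is terminal then $\varphi$ acts as the identity on it, so the conclusion $L(G)=L(G')=L$ is untouched.
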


\section{Main Results for SSCM Grammars}

Our first theorem is the first one that deals with SSCM grammars of degree~$(2,1)$.

\begin{theorem}\label{thm:sscm-21532}
$\mathrm{SSCM}(2,1;5;3,2)=\mathrm{RE}$.
\end{theorem}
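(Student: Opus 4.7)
The inclusion $\mathrm{SSCM}(2,1;5;3,2)\subseteq\mathrm{RE}$ is routine, so the task is to establish the converse. My plan is to simulate a type-0 grammar in $(5,2)$-GNF as parsimoniously as possible. Given $L\in\mathrm{RE}$, I would invoke Proposition~\ref{propos:(5,2)-GNF} to fix a type-0 grammar $G=(V,T,P\cup\{AB\to\emptyword,CD\to\emptyword\},S)$ with nonterminal set $N=\{S,A,B,C,D\}$ of size five; this already exhausts the nonterminal budget, so the SSCM-grammar to be built will reuse exactly these symbols.

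I would construct $G'=(V,T,M,S)$ as follows. For every context-free production $(S\to\alpha)\in P$ I put into~$M$ the length-one unconditional matrix $[(S\to\alpha),\emptyset,\emptyset]$. The two non-context-free rules are encoded by the length-two conditional matrices
\[
m_{1}=[(A\to\emptyword),(B\to\emptyword),AB,\emptyset]\quad\text{and}\quad m_{2}=[(C\to\emptyword),(D\to\emptyword),CD,\emptyset].
\]
The resulting grammar uses $n=5$ nonterminals and matrix length $\ell=2$, contains only two conditional matrices (within the budget of three), and has degree $(2,0)$, which lies within $(2,1)$; so every parameter constraint is met.

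Showing $L(G)\subseteq L(G')$ amounts to a direct simulation: every Phase-1 step of $G$ is mirrored by the matching unconditional matrix, and each application of $AB\to\emptyword$ (resp.\ $CD\to\emptyword$) is mirrored by $m_1$ (resp.\ $m_2$) after choosing to delete the same $A$ and $B$ (resp.\ $C$ and $D$) that Geffert's rule erases; the permitting condition holds precisely when the corresponding non-context-free rule of~$G$ is applicable.

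For the reverse inclusion $L(G')\subseteq L$, I would run a structural invariant argument based on~\eqref{eq:SF-GNF-(5,2)}: every sentential form of $G'$ lies in $L_{(5,2)}$. While $S$ is present no $AB$ or $CD$ factor can occur (since $S$ separates the $\{A,C\}^{*}$ prefix from the $\{B,D\}^{*}$ blocks), so $m_1,m_2$ stay inert and $G'$ is doing exactly Phase~1 of~$G$. Once $S$ has been deleted, $AB$ and $CD$ can arise only at the unique $\{A,C\}^{*}$--$\{B,D\}^{*}$ junction, and firing $m_i$ preserves both membership in $L_{(5,2)}$ and the ordered list of terminal letters. The main obstacle, and the reason this simulation deserves care, is that $m_1$ (resp.\ $m_2$) is free to non-deterministically pick any $A$ from the prefix and any $B$ from any $\{B,D\}^{*}$ block, rather than the contiguous pair that Geffert's rule removes. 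I would close this gap by observing that every matrix of $G'$ preserves the differences $\#A-\#B$ and $\#C-\#D$ together with the ordered sequence of terminal letters, so reaching a terminal word in $G'$ forces all occurrences of $A,B,C,D$ to vanish and yields precisely the terminal sequence arising from the underlying Phase-1 derivation -- a derivation which is also a valid Geffert derivation producing the very same word, so that word lies in~$L$.
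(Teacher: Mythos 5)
Your construction for $L(G)\subseteq L(G')$ is fine, but the converse inclusion has a genuine gap, and it sits exactly where you locate the ``main obstacle'': the counting invariant you propose cannot close it. The correctness of $(5,2)$-GNF hinges on the fact that $AB\to\emptyword$ and $CD\to\emptyword$ erase an \emph{adjacent} pair at the junction between the $\{A,C\}^*$ prefix and the $\{B,D\}$ material, so that full cancellation certifies that the prefix is the reversed complement of the suffix; this matching is what rejects wrongly guessed Phase-1 derivations. Your matrices only check that \emph{some} factor $AB$ (resp.\ $CD$) is present at the moment the matrix is started, and then delete arbitrary occurrences; this relaxed cancellation is strictly more powerful than Geffert's. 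Concretely, take a center $ACABBD$ (counts balanced, shape in $\{A,C\}^*\{B,D\}^*$, factor $AB$ present): with $m_1$ you may delete the first $A$ together with the second $B$, reaching $CABD$, then $m_1$ and $m_2$ erase everything; with Geffert's rules the only possible step leads to $ACBD$, where the derivation is stuck. So preservation of $\#_A-\#_B$ and $\#_C-\#_D$ together with the terminal trace does not imply that the underlying Phase-1 string is reducible in $G$, and your final sentence (``a derivation which is also a valid Geffert derivation producing the very same word'') tacitly assumes that strict cancellation succeeds whenever relaxed cancellation does, which is false. Consequently $L(G')\subseteq L(G)$ is unproven and, for this construction, cannot be expected to hold; the fact that you would get away with only two conditional matrices and degree $(2,0)$ is itself a warning sign.

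The paper's proof is built precisely to enforce adjacency, and this is what the fifth nonterminal is spent on: it starts from $(4,2)$-GNF (nonterminals $S,A,B,C$) plus a fresh marker $\#$, first rewrites a candidate pair into two $\#$'s under the \emph{forbidden} context $\#$ (so only one pair is ever marked), and erases them only by a separate matrix whose \emph{permitting} context is $\#\#$; hence the derivation can continue only if the two marked occurrences were adjacent, and the observation that $BA$ is never a subsequence of a reachable sentential form guarantees the marked $A$ precedes the marked $B$. To salvage your route you would need an analogous mechanism that forces the two deleted occurrences to be the adjacent ones; the permitting string $AB$ alone cannot provide it.
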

\begin{proof}
Let $L \in \mathrm{RE}$ be generated by a grammar in $(4,2)$-GNF \LV{of the form }$G= (V, T, P\cup \{AB\to \emptyword, CC\to \emptyword\}, S)$ such that $P$ contains only context-free rules and $N=V\setminus T=\{S, A, B, C\}$ (see Propos. \ref{propos:(4,2)-GNF}). 
Next, we define the $\mathrm{SSCM}$-grammar $G'= (V', T, P'\cup P'', S)$, where $V' = V \cup \{\#\}$ (assuming $\# \not\in V$), $P'$ contains the (single-rule) unconditional matrices of the form $[(S\to \alpha), \emptyset,\emptyset]$ whenever $S\to \alpha \in P$ and $P''$ contains the three (multi-rule and simple) conditional matrices 
$$\begin{array}{rl}
r1=&[(A\to \#),(B \to\#),\emptyset,\# ]\\
r2=&[(C\to \#),(C\to \#), \emptyset,\# ]\\
r3=&[(\#\to\lambda),(\# \to \lambda), \#\#,\emptyset]
\end{array}$$
Clearly, $G'$ has degree $(2,1)$, 5 nonterminals, and 3 conditional binary matrices.

We now show that $L(G')=L(G)$. Trivially, $r1$ (or $r2$, resp.) and then $r3$ can simulate $AB\to\emptyword$ (or $CC\to\emptyword$, resp.), so that $L(G)\subseteq L(G')$ is clear. For the reverse inclusion, we first make some observations concerning any sentential form~$w$ that is derivable in $G'$; they can be shown by easy inspection or induction and will be used in the following without special mentioning.
\begin{enumerate}
\item  If~$w$ contains no~$\#$, then all matrices but $r3$ may apply.
    \item If~$w$ contains any~$\#$, then only unconditional matrices or $r3$ may apply.
    \item If~$w$ contains no~$S$, then no sentential form~$w'$ derivable from~$w$ in $G'$ will ever contain any~$S$.
    \item If~$w$ contains any~$S$, then there is no second occurrence of~$S$ in~$w$.
\end{enumerate}

To prove $L(G)\supseteq L(G')$, we will actually show the following claim by induction on the length of the derivation of~$w$:
If $S\Ra_{G'}^*w$, then either (1) $S\Ra_G^*w$ or (2) there exist two occurrences of $\#$ in $w$, i.e., $w=w_1\#w_2\#w_3$ and there exists a sentential form $w'$ such that $S\Ra_G^*w'$ and either $w'=w_1Aw_2Bw_3$ or $w'=w_1Cw_2Cw_3$.\LV{

}
The claim is trivially true if $w=S$ (derivation length 0).

Consider some sentential form $w$ such that $S\Ra_{G'}^n w$ for some $n>0$.
Then, there is some $v$ such that $S\Ra_{G'}^{n-1}v\Ra_{G'}w$. 
By induction, we know that either (1) $S\Ra_G^*v$, i.e., $v\in L_{(4,2)}$ as defined in Eq.~\eqref{eq:SF-GNF-(4,2)}, or (2) there exist two occurrences of $\#$ in $v$, i.e., $v=v_1\#v_2\#v_3$ and there exists a sentential form $v'$ such that $S\Ra_G^*v'$ and either (a) $v'=v_1Av_2Bv_3$ or (b) $v'=v_1Cv_2Cv_3$.

Consider Case~(1). We might apply $r1$ to get~$w$. Then, we get $w=w_1\#w_2\#w_3$.
As $v$ is also a sentential form of~$G$, $BA$ is not a subsequence of~$v$.
Therefore, $v=w_1Aw_2Bw_3$ as claimed.
Similarly, we might apply $r2$ to get~$w$, arriving again at $w=w_1\#w_2\#w_3$. Now, $v=w_1Cw_2Cw_3$ as claimed.
Finally, if $S$ occurs in~$v$, we can also apply an unconditional matrix. As this (trivially) corresponds to applying a context-free rule, $S\Ra_G^*w$.

Consider Case~(2). Hence, $v=v_1\#v_2\#v_3$. If $S$ does not occur in~$v$, then 
we must apply matrix $r3$. This is only possible if $v_2=\emptyword$, i.e., if $v_2\neq\emptyword$, the derivation is stuck. We mark this observation as $[*]$ which we would recall again later. Hence, $v'=v_1ABv_3$ or $v'=v_1CCv_3$ are derivable in~$G$ by induction hypothesis.
Now, $v\Ra_{r3}w$ yields $v'\Ra_{G}w$ or $v'\Ra_{G}w$ by applying the deletion rules $AB\to\emptyword$ or $CC\to\emptyword$, respectively. Hence, $S\Ra_G^*w$.

In the following, when considering Cases (2a) and (2b) separately, we assume that $S$ occurs in~$v$. Recall that the position of~$S$ is then unique.

Consider Case (2a) first, i.e., there is sentential form $v'$ such that $S\Ra_G^*v'$ and $v'=v_1Av_2Bv_3$. 
As $S$ occurs in~$v$, it will occur in~$v_2$ by the structure of rules of $(4,2)$-GNF. Hence, none of the conditional matrices is applicable on~$v$. Rather, we have to apply an unconditional matrix (with context-free rule $S\to \alpha$) to $v$ in order to get $w$. Hence, $v=v_1\#v_2'Sv_2''\#v_3$, as $v_2=v_2'Sv_2''$ for some $v'_2,v_2''$. Then, $w=v_1\#v_2'\alpha v_2''\#v_3$.
Now, $v'\Ra_G w'$ with  $w'=v_1 Av_2'\alpha v_2''Bv_3$ by applying the context-free rule $S\to \alpha$. This shows the claim also in this subcase of Case (2a).

Now, consider Case (2b). This means that there is sentential form $v'$ such that $S\Ra_G^*v'$ and $v'=v_1Cv_2Cv_3$. If $S$ is contained in $v_2$, then the argument is analogous to the one of the previous paragraph. If $S$ is contained in $v_1$, then (with $v_1=v_1'Sv_1''$) applying an unconditional matrix (with context-free rule $S\to \alpha$) to $v$ would result in $w=v_1'\alpha v_1''\#v_2\#v_3$. Now, if one would apply $S\to \alpha$ on $v'$, we get $w'$, with $w'=v_1'\alpha v_1''Cv_2Cv_3$. This shows the claim also in this subcase of Case (2b). As a side-remark: Clearly, one can continue this argument until $\alpha$ does not contain~$S$ anymore. But then, the derivation will be stuck according to the analysis of $[*]$, because $v_2\neq\emptyword$ as $CC$ cannot be a substring of any sentential form derivable in Stages~1 or~2 in~$G$.
The case when $S$ is contained in~$v_3$ is analogous to the case when $S$ is contained in~$v_1$.

By induction, the claim follows. As for terminal strings~$w$, only Case (1) can happen,  $L(G)\supseteq L(G')$ can be inferred.\qed
\end{proof}

The following is a trade-off result of the parameters degree and number of nonterminals compared to the previous theorem. Moreover, it returns to the consideration of degree $(3,1)$, a case previously considered \LV{by A. Meduna and T. Kopeček }in~\cite{MedKop2004}.

\begin{theorem}\label{thm:sscm-31522}
$\mathrm{SSCM}(3,1;5;2,2)=\mathrm{RE}$.
\end{theorem}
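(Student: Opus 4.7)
The inclusion $\mathrm{SSCM}(3,1;5;2,2)\subseteq\mathrm{RE}$ is standard, so the plan focuses on the converse. Take any $L\in\mathrm{RE}$ generated by a type-$0$ grammar $G=(V,T,P\cup\{0\$1\to\$,1\$0\to\$,\$\to\emptyword\},S)$ in modified MMNF (Proposition~\ref{propos:MMMNF}), with $N=\{S,0,1,\$\}$, and construct an equivalent $\mathrm{SSCM}$-grammar $G'=(V',T,P'\cup P'',S)$ by adjoining a single fresh nonterminal~$\#$, so that $|N'|=5$. Put into $P'$ the unconditional matrix $[(S\to\alpha),\emptyset,\emptyset]$ for every $S\to\alpha\in P$ together with the unconditional matrix $[(\$\to\emptyword),\emptyset,\emptyset]$; then take
\[
r_1=[(0\to\#),(1\to\#),\emptyset,\#]\,,\qquad r_2=[(\#\to\emptyword),(\#\to\emptyword),\#\$\#,\emptyset]\,,
\]
and set $P''=\{r_1,r_2\}$. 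Both matrices are simple and binary: $r_1$ has empty permit and forbidden string~$\#$ of length~$1$, while $r_2$ has permit $\#\$\#$ of length~$3$ and empty forbidden condition. Thus $G'$ uses $5$ nonterminals, $2$ conditional matrices of length~$2$, and degree $(3,1)$, matching the claimed resources.

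The simulation rests on two observations. First, $r_1$ marks exactly one $0$ and one $1$ as~$\#$, and its forbidden-$\#$ condition then prevents any further $r_1$-step, so every sentential form that is reached just after firing $r_1$ contains precisely two $\#$s, one originating from a $0$ and one from a~$1$. Second, the permit $\#\$\#$ of $r_2$ forces those two $\#$s to lie immediately on either side of~$\$$, so the symbols they replaced were the unique letters flanking~$\$$, and together they form a $0\$1$ or a $1\$0$ pattern---precisely one application of a shrinking rule of~$G$. Any $r_1$-application that marks letters \emph{not} adjacent to~$\$$ leaves $\#$s that $r_2$ cannot erase and that $r_1$ cannot override (forbidden~$\#$); the unconditional $S$-matrices are inert because $S$ has already been consumed by the Stage-$2$ rule $S\to u\$v$, and invoking $[(\$\to\emptyword)]$ destroys the only $\$$, permanently orphaning the stray $\#$s. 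Every such branch therefore dead-ends without producing any terminal.

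The inclusion $L(G)\subseteq L(G')$ is then immediate: every CF $S$-rule is carried by its unconditional matrix, every MMMNF shrinking is simulated by an $r_1,r_2$ pair whose $r_1$-step marks precisely the $0$ and $1$ flanking~$\$$, and $\$\to\emptyword$ is carried by its own unconditional matrix. For the converse $L(G')\subseteq L(G)$ the plan is induction on the length of a $G'$-derivation, establishing that every derivable sentential form~$w$ falls into exactly one of: (i)~$w$ contains no~$\#$ and $S\Ra_G^* w$; (ii)~$w=\gamma\#\$\#\delta$ and $S\Ra_G^*\gamma x\$y\delta$ for some $\{x,y\}=\{0,1\}$; (iii)~$w$ carries $\#$s but no continuation in~$G'$ can reach~$T^*$. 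Only case~(i) admits $w\in T^*$, giving $L(G')=L(G)$. The main obstacle is the bookkeeping for case~(iii): one has to argue that no admissible interleaving of the four available actions ($r_1$, $r_2$, the unconditional $S$-matrices, and $[(\$\to\emptyword)]$) can ever restore the permit $\#\$\#$ or otherwise remove the stray $\#$s. This is a finite stuckness analysis in the same spirit as the one in the proof of Theorem~\ref{thm:sscm-21532}, now exploiting the fact that a \emph{single} marker~$\#$ together with the length-$3$ permit $\#\$\#$ is enough to discriminate both shrinking rules of~$G$ simultaneously.\qed
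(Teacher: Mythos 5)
Your construction is exactly the paper's: the same MMMNF starting point, the same two conditional matrices $r_1,r_2$, and the same unconditional matrices (including $[(\$\to\emptyword),\emptyset,\emptyset]$), so the resource count and the inclusion $L(G)\subseteq L(G')$ are fine. The difference lies in the converse direction, and there your argument has a genuine hole. Your invariant (ii) insists that the two $\#$s already flank the $\$$, so all misplaced markings must be swept into the stuckness case (iii), and your justification of (iii) rests on the assertion that ``the unconditional $S$-matrices are inert because $S$ has already been consumed by the Stage-2 rule $S\to u\$v$.'' That is false in general: $r_1$ only needs one occurrence of $0$ and one of $1$ and no $\#$, so it can fire while $S$ is still present (indeed already in Stage~1, before any $\$$ exists). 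After such an early marking the $S$-matrices are not inert at all; the derivation can continue with $S\to uSv$/$S\to uSa$ and eventually $S\to u\$v$ introduces a fresh $\$$. To close case (iii) you must argue that even then the permit $\#\$\#$ can never become true --- which does hold, because the $\#$s never move and any newly created $\$$ has the last letter of $u\in\{0,1\}^+$ (not $\#$) immediately to its left --- but this structural argument is nowhere in your sketch, and you yourself flag case (iii) as ``the main obstacle'' rather than proving it. As written, the proof is a plan whose key step is both missing and supported by an incorrect premise.

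It is worth noting how the paper sidesteps this entirely: its inductive invariant does not require the $\#$s to flank $\$$. It only states that if $S\Ra_{G'}^*w$ then either $S\Ra_G^*w$, or $w=w_1\#w_2\#w_3$ and some $w'\in\{w_10w_21w_3,\,w_11w_20w_3\}$ satisfies $S\Ra_G^*w'$. With this weaker invariant, every $r_1$-application (well placed or not, with or without $S$ present) immediately lands in the second alternative by taking $w'=v$, every unconditional step is mirrored in $G$ on $w'$, and an application of $r_2$ forces $w_2=\$$ through its permitting string, at which point the corresponding shrinking rule of $G$ applies. No stuckness analysis is needed; dead branches simply remain in alternative (2) and never reach $T^*$. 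If you want to keep your stronger invariant, you must supply the missing argument about newly created $\$$'s; otherwise, adopting the paper's correspondence-style invariant is the cleaner repair.
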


\begin{proof} Let $L\subseteq T^*$ be any $\mathrm{RE}$ language. 
By Propos.~\ref{propos:MMMNF}, there is a type-0 grammar of the form $G= (V, T, P\cup \{0\$1\to \$, 1\$0\to \$,\$\to\emptyword\}, S)$ with $L(G)=L$ such that $P$ contains only context-free rules  with left-hand side~$S$ and $N\coloneqq V\setminus T=\{S, 0, 1,\$\}$, i.e., $G$ is in MMMNF.
We need one more additional nonterminal~$\#$ in the SSCM grammar~$G'$ that we describe next. As before, we take over all context-free rules of $G$ as single-rule unconditional matrices (that we do not count into our numbers). Also, we have the matrix $r_\$=[(\$ \to\emptyword),\emptyset,\emptyset]$.  The two non-context-free rules are simulated by the following conditional matrices.
\begin{eqnarray*}
r1 &=& [(0\to\#),(1\to\#),\emptyset,\#]\\
r2 &=& [(\#\to\emptyword),(\#\to\emptyword),\#\$\#,\emptyset]
\end{eqnarray*}
It should be clear how this simulation works, so that $L(G)\subseteq L(G')$ is obvious.

For the converse inclusion $L(G)\supseteq L(G')$, we can actually show the following claim by induction on the length of the derivation of~$w$:
If $S\Ra_{G'}^*w$, then either (1) $S\Ra_G^*w$ or (2) there exist two occurrences of $\#$ in $w$, i.e., $w=w_1\#w_2\#w_3$ and there exists a sentential form $w'$ such that $S\Ra_G^*w'$ and either $w'=w_10w_21w_3$ or $w'=w_11w_20w_3$. \SV{Details can be found in the long version of this paper. \qed\end{proof}}

\begin{toappendix}
\SV{\subsection{The Induction Proof of Correctness of Theorem~\ref{thm:sscm-31522}}}
The claim is trivially true if $w=S$ (derivation length 0).

Consider some sentential form $w$ such that $S\Ra_{G'}^n w$ for some $n>0$.
Then, there is some $v$ such that $S\Ra_{G'}^{n-1}v\Ra_{G'}w$. 
By induction, we know that either (1) $S\Ra_G^*v$, i.e., $v\in L_{\text{MMNF}}$ as defined in Eq.~\eqref{eq:SF-MMNF}, or (2) there exist two occurrences of $\#$ in $v$, i.e., $v=v_1\#v_2\#v_3$ and there exists a sentential form $v'$ such that $S\Ra_G^*v'$ and either (a) $v'=v_10v_21v_3$ or (b) $v'=v_11v_20v_3$.

In Case (1), we might apply an unconditional matrix in the step $v\Ra_{G'}w$, but this clearly corresponds to applying a context-free rule in~$G$, so that $S\Ra_G^*w$ follows. Alternatively, we can apply the conditional matrix $r1$.
Then, clearly one occurrence of~0 and one occurrence of~1 in~$v$ will be replaced by~$\#$ in order to obtain~$w$. Hence, we arrive at a string~$w$ with  two occurrences of $\#$, and we also have a sentential form $w'$ of $G$ that can be obtained from $w$ by appropriately replacing one $\#$ by~0 and the other $\#$ by~1. Namely, choose $w'=v$.

In Case (2), again we might apply  an unconditional matrix  in the step $v\Ra_{G'}w$. If we apply the corresponding context-free rule of~$G$ analogously to $v'$, we arrive at $w'$, i.e., $v'\Ra_G w'$ such that $w'$ testifies that~$w$ satisfies Condition (2). If we apply a conditional matrix instead, it must be matrix~$r2$. This means (by the permitting context) that $v=v_1\#\$\#v_2$.
Also, we know that $v'=v_10\$1v_2$ or $v'=v_11\$0v_2$ by induction. Therefore, $v'\Ra_Gw$ holds by applying one of the two non-context-free rules of~$G$.

By induction, the claim follows. As for terminal strings~$w$, only Case (1) can happen,  $L(G)\supseteq L(G')$ can be inferred.
\end{toappendix}
\LV{\qed \end{proof}}

\begin{toappendix}
\SV{\subsection{A Remark on Trade-off Results}}    
\begin{remark}
One could try to get a trade-off result to the previous result by modifying the construction as follows.
\begin{eqnarray*}
r1 &=& [(0\to\$\$),(1\to\$),\emptyset,\$\$]\\
r2 &=& [(\$\to\emptyword),(\$\to\emptyword),(\$\to\emptyword),\$\$\$\$,\emptyset]\\
r3 &=& [(\$\to\emptyword),\emptyset,1]
\end{eqnarray*}

In fact, the correctness of the last matrix can be seen by thinking of a strong version of MMMNF, i.e., of sMMMNF.
So, the idea is to reduce the number of nonterminals to~4, but ``paying'' for this by increasing all other parameters, leading to $\mathrm{SSCM}(4,2;4;3,3)=\mathrm{RE}$. Yet, we will present another (better) trade-off result  now, based on a different normal form. Compared to Theorem~\ref{thm:sscm-31522}, 
we use less nonterminals but more conditional matrices, and they are of length~3.
But one can also see that it really depends on what parameters are considered when different simulations are compared to each other. The parameters
$(4,2;4;3,3)$ of the sketch above are clearly worse than $(3,1;4;3,3)$ in terms of degree, but if one would count (e.g.) the overall number of rules involved in conditional matrices, then we would count 6 in the sketch, while we have 8 in the construction of Theorem~\ref{thm:sscm-31433}.
\end{remark}
\end{toappendix}

\noindent
The next result is a trade-off result to the previous theorem and also to \cite{MedKop2004}.

\begin{theorem}\label{thm:sscm-31433}
$\mathrm{SSCM}(3,1;4;3,3)=\mathrm{RE}$.     
\end{theorem}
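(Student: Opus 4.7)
I would adapt the construction of Theorem~\ref{thm:sscm-31522} to start from a normal form using only three nonterminals, so that one additional marker~$\#$ stays within the budget of four nonterminals. The natural candidate is the $(3,2)$-GNF of Proposition~\ref{propos:(3,2)-GNF}, with nonterminals $\{S,A,B\}$ and non-context-free deletion rules $AA\to\emptyword$ and $BBB\to\emptyword$. Given a grammar $G$ in that normal form generating $L\in\mathrm{RE}$, I would construct an SSCM-grammar $G'=(V',T,P'\cup P'',S)$ with $V'=V\cup\{\#\}$, where $P'$ contains the single-rule unconditional matrix $[(S\to\alpha),\emptyset,\emptyset]$ for every context-free rule $S\to\alpha\in P$, and $P''$ consists of the three conditional matrices
\begin{align*}
r_1 &= [(A\to\#\#),(A\to\#),\emptyset,\#]\,,\\
r_2 &= [(B\to\#),(B\to\#),(B\to\#),\emptyset,\#]\,,\\
r_3 &= [(\#\to\emptyword),(\#\to\emptyword),(\#\to\emptyword),\#\#\#,\emptyset]\,.
\end{align*}
The intuition is that $r_1$ converts a pair of adjacent $A$'s into the block $\#\#\#$ (splitting $3=2+1$ between the two $A$-positions), $r_2$ converts three adjacent $B$'s into $\#\#\#$, and $r_3$ erases any such block. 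The parameter count is: degree $(3,1)$, four nonterminals $\{S,A,B,\#\}$, three conditional matrices, maximum matrix length three; and $P'$ consists only of unconditional matrices (hence simple).

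The inclusion $L(G)\subseteq L(G')$ follows by direct simulation: context-free rules are mimicked via $P'$, while $AA\to\emptyword$ is simulated by $r_1$ followed by $r_3$, and $BBB\to\emptyword$ by $r_2$ followed by $r_3$. For the converse, I would prove by induction on derivation length a claim paralleling those of Theorems~\ref{thm:sscm-21532} and~\ref{thm:sscm-31522}: every sentential form $w$ derivable in $G'$ either (1) satisfies $S\Ra_G^* w$, or (2) has exactly three occurrences of $\#$ forming an adjacent block, i.e.\ $w=w_1\#\#\#w_2$, such that $S\Ra_G^* w'$ for some $w'\in\{w_1AAw_2,\,w_1BBBw_2\}$. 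The driving observations are that the forbidden condition $\#$ on $r_1$ and $r_2$ prevents any second marking step while a previous marking is pending, that unconditional matrices only rewrite $S$ and therefore cannot displace existing $\#$'s relative to each other, and that the permitting condition $\#\#\#$ on $r_3$ forces the three $\#$'s introduced by $r_1$ (respectively $r_2$) to be consecutive in the string—which back-propagates to the requirement that the original two $A$'s (respectively three $B$'s) were themselves adjacent in the simulated derivation.

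The main obstacle will be the case analysis in the inductive step when an unconditional $S$-matrix fires between a marking step and the matching $r_3$. Since such rules may insert $\{AB,ABB\}^+\,S\,\{BA,BBA\}^*$-material away from the marked positions, I would need to show that each such insertion preserves both the adjacency of the $\#\#\#$-block and the correspondence to a $G$-derivation by applying the same context-free rule to the mirror sentential form $w'$. A further subtlety specific to this construction (absent from Theorem~\ref{thm:sscm-31522}) is that $r_1$ is asymmetric: the $\#\#$ can come from either of the two $A$-occurrences, so both orderings must be treated—but both yield the contiguous block $\#\#\#$ and thus the same simulated step $AA\to\emptyword$. Once this invariant is established, restricting to terminal strings $w\in T^*$ forces Case~(1), which yields $L(G')\subseteq L(G)$ and completes the proof.
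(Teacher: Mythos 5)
Your construction is the paper's construction (up to swapping the order of the two $A$-rules and the naming of the matrices), so the simulation direction $L(G)\subseteq L(G')$ and the parameter count are fine. The gap is in the converse direction: the inductive claim you propose is false as stated, so the induction breaks. You claim that every sentential form derivable in $G'$ either is derivable in $G$ or has its three occurrences of $\#$ forming a contiguous block $\#\#\#$ with $S\Ra_G^* w_1AAw_2$ or $S\Ra_G^* w_1BBBw_2$. But the forbidden condition $\#$ in $r_1$ and $r_2$ only guarantees that no $\#$ was present \emph{before} the marking step; it does not force the marked occurrences to be adjacent. For instance, from a $G$-sentential form such as $ABBABS\cdots$ (which lies in $L_{(3,2)}$), your $B$-marking matrix may rewrite three pairwise non-adjacent occurrences of $B$, and your $A$-marking matrix may rewrite two non-adjacent occurrences of $A$; the resulting strings are derivable in $G'$ yet satisfy neither alternative of your claim (they contain $\#$, so not (1), and the $\#$'s are scattered, so not (2)). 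Your remark that the permitting context $\#\#\#$ of $r_3$ ``back-propagates'' adjacency conflates ``such derivations are stuck'' with ``such sentential forms do not occur'': they do occur, and a forward induction over all derivable sentential forms must account for them.

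The repair is exactly what the paper does: weaken condition (2) to allow the three $\#$'s at arbitrary positions, $w=w_1\#w_2\#w_3\#w_4$, with a $G$-derivable counterpart $w'=w_1Bw_2Bw_3Bw_4$ in the $B$-case, and $w'=w_1Aw_2w_3Aw_4$ with $w_2=\emptyword$ or $w_3=\emptyword$ in the $A$-case (the latter refinement records which two of the three $\#$'s stem from the single rule $A\to\#\#$, which you would otherwise be unable to reconstruct). Unconditional matrices preserve this weaker invariant since they only rewrite the unique $S$ inside one of the segments $w_i$, and adjacency is then extracted only at the moment $r_3$ actually fires: its permitting context $\#\#\#$ forces the relevant gaps to be empty, so the counterpart $w'$ contains $BBB$ (resp.\ $AA$) at that position and the deletion rule of $G$ yields the required matching step. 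With the invariant restated in this weaker form, your argument goes through and coincides with the paper's proof.
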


\begin{proof}
Let $L \in \mathrm{RE}$ be generated by a grammar in $(3,2)$-GNF of the form $G= (V, T, P\cup \{AA\to \emptyword, BBB\to \emptyword\}, S)$ such that $P$ contains only context-free rules and $V\setminus T=\{S, A, B\}$ (see Propos.~\ref{propos:(3,2)-GNF}). 
Next, we define the $\mathrm{SSCM}$-grammar $G'= (V', T, P'\cup P'', S)$, where $V' = V \cup \{\#\}$ (assuming that $\# \not\in V$), $P'$ contains the (single-rule) unconditional matrices of the form $[(S\to \alpha), \emptyset,\emptyset]$ whenever $S\to \alpha \in P$ and $P''$ contains the three (multi-rule) matrices 
\begin{eqnarray*}
r1 &=& [(B\to\#),(B\to\#),(B\to\#),\emptyset,\#]\\
r2 &=& [(A\to\#),(A\to\#\#),\emptyset,\#]\\
r3 &=& [(\#\to\lambda),(\#\to\lambda),(\#\to\lambda),\#\#\#,\emptyset]
\end{eqnarray*}
\SV{For the formal correctness proof, we refer to the long version.}
\begin{toappendix}
\SV{\subsection{Correctness Proof for the Construction of Theorem~\ref{thm:sscm-31433}}}
The intended simulation of the two non-context-free deletion rules is obvious and hence $L(G)\subseteq L(G')$ is clear. 
Likewise, $G'$ satisfies the claims concerning the descriptional complexity parameters.

To prove $L(G)\supseteq L(G')$, we will actually show the following claim by induction on the length of the derivation of~$w\in (V')^*$:
If $S\Ra_{G'}^*w$, then either (1) $S\Ra_G^*w$ or (2) there exist three occurrences of $\#$ in $w$, i.e., $w=w_1\#w_2\#w_3\#w_4$ and there exists a sentential form $w'$ such that $S\Ra_G^*w'$ and either $w'=w_1Bw_2Bw_3Bw_4$ or $w'=w_1Aw_2w_3Aw_4$ and $w_2=\emptyword$ or, symmetrically, $w_3=\emptyword$.

The claim is trivially true if $w=S$ (derivation length 0).

Consider some sentential form $w$ such that $S\Ra_{G'}^n w$ for some $n>0$.
There must be some $v\in (V')^*$ such that $S\Ra_{G'}^{n-1}v \Ra_{G'}w$.
By induction, either  (1) $S\Ra_G^*v$ or (2) there exist three occurrences of $\#$ in $w$, i.e., $v=v_1\#v_2\#v_3\#v_4$ and there exists a sentential form $v'$ such that $S\Ra_G^*v'$ and either (a) $v'=v_1Bv_2Bv_3Bv_4$ or (b) $v'=v_1Av_2v_3Av_4$ and $v_2=\emptyword$ or, symmetrically, $v_3=\emptyword$.

Assume that (1) holds. Then, $\#$ does not occur in~$v$. If $S$ occurs in~$v$, then $v \Ra_{G'}w$ could be due to applying an unconditional matrix. This is clearly equivalent to applying some context-free rule in~$G$, i.e., $w$ is (also) a sentential form derivable in~$G$. Otherwise, we would have to apply either matrix $r1$ or matrix $r2$ (but not $r3$ by the absence of~$\#$).
On applying $r1$, we turn three occurrences of~$B$ into~$\#$, so that $w$ would satisfy condition (2a) above. On applying $r2$, one occurrence of~$A$ is turned into $\#$ and another one into $\#\#$, this way satisfying condition (2b) for~$w$.

Assume that (2) holds. If $S$ occurs in~$v$, then $v \Ra_{G'}w$ could be due to applying an unconditional matrix, basically applying the context-free rule $S\to\alpha$ on~$v$. First assume (2a) holds. Hence, $v=v_1\#v_2\#v_3\#v_4$ so that $v'=v_1Bv_2Bv_3Bv_4$ exists with $S\Ra^*G v'$ by induction. Applying $S\to\alpha$ on~$v$ means that there is a (unique) occurrence of~$S$ in some $v_i$ for $i\in\{1,2,3,4\}$, i.e., $v_i=v_i'Sv_i''$. Now, define $w_j=v_j$ if $j\in\{1,2,3,4\}\setminus \{i\}$ and $w_i=v_i'\alpha v_i''$. Then, $w=w_1\#w_2\#w_3\#w_4$ and, moreover, $v'\Ra_G w'$ with $w'=w_1Bw_2Bw_3Bw_4$.
The argument in case (2b) is analogous and hence omitted. In both cases, we see that condition (2) is satisfied for~$w$.

Now, assume that  (2) holds and that $v \Ra_{G'}w$ was not due to applying an unconditional matrix. As $\#$ is present in~$v$, then $r3$ must be applied.
This means that $\#\#\#$ is a substring of~$v$, so that, keeping up earlier index schemes, $v=v_1\#\#\#v_4$. On applying $r3$, we arrive at $w=v_1v_4$.
By induction, there is some $v'$ with $S\Ra_G^*v'$, such that either (2a) $v'=v_1BBBv_4$ or (2b) $v'=v_1AAv_4$. (Recall that we have these more restricted variants because $\#\#\#$ is a substring of~$v$.) Now, applying either (in (2a)) $BBB\to\emptyword$ or (in (2b)) $AA\to\emptyword$, we arrive at $w=v_1v_4$, i.e., $S\Ra_G^* w$ holds, which is condition (1) for~$w$.
\end{toappendix}
\qed 
\end{proof}

\section{Non-simple Semi-Conditional Matrix Grammars}

In the previous section, we arrived at some computational completeness results for simple SCM grammars. We only got one result with four nonterminals, but in this case, we could not get down to matrix length two. Therefore, we are now relaxing the simplicity condition, rather measuring non-simplicity as an additional parameter, to be able to get several results with four nonterminals and matrix length two. Several trade-offs will be observed.

\SV{\begin{table}[tb]
    \centering
    \begin{minipage}{.48\textwidth}
\small  \begin{eqnarray*}
r1 &=& [(B\to\#), (B\to\#),BBB,\#]\\
r2 & = & [(\#\to\emptyword),(B \to \#\#\#),A\#\#B,\#\#\#]\\
r3 &=& [(A\to\#),(A\to\#),AA,\#]\\
r4 &=& [(\#\to\lambda),(\#\to\lambda),\#\#\#\#,\#B] \\
r5 & = & [(\# \to \lambda),(\# \to \lambda), A\#\#A,BBB]\\
r6 & = & [(\# \to \lambda), (\# \to \lambda),B\#\#B,\emptyset] \\
r7 & = & [(\# \to \lambda), (\# \to \lambda),\#\#,A] 
\end{eqnarray*}
    \end{minipage} \quad\,   \begin{minipage}{.47\textwidth}
\small       \begin{eqnarray*}
r1 &=& [(B\to\#),(B\to AA), ABBBA,AA] \\
r2 & = & [(B \to \#\#\#),A\#AAB,\#\#] \\
r3 & = & [(\# \to \emptyword),(\# \to \emptyword), \#AA\#,\emptyset]\\
r4 &=& [(A\to\#),(A\to\#\#\#),AA,\#]\\
r5 &=& [(\#\to\emptyword),(\#\to\emptyword),\#\#\#\#,\emptyset]
\\
r6 & = & [(\# \to \emptyword), (\# \to \emptyword), B\#\#B,\emptyset]
\\
r7 & = & [(\# \to \emptyword), (\# \to \emptyword),\#\#,A]
\end{eqnarray*}
    \end{minipage}\\[1ex]
    \caption{Simulating $(3,2)$-GNF in two (related) ways: Theorem~\ref{thm:scm-434726} for $\mathrm{SCM}(4,3;4;7,2,6)$ on the left and Theorem~\ref{thm:scm-524724} for $\mathrm{SCM}(5,2;4;7,2,4)$ on the right.}
    \label{tab:two-simulations}
\end{table}}

\LV{\begin{table}[tb]
    \centering
    \begin{eqnarray*}
r1 &=& [(B\to\#), (B\to\#),BBB,\#]\\
r2 & = & [(\#\to\emptyword),(B \to \#\#\#),A\#\#B,\#\#\#]\\
r3 &=& [(A\to\#),(A\to\#),AA,\#]\\
r4 &=& [(\#\to\lambda),(\#\to\lambda),\#\#\#\#,\#B] \\
r5 & = & [(\# \to \lambda),(\# \to \lambda), A\#\#A,BBB]\\
r6 & = & [(\# \to \lambda), (\# \to \lambda),B\#\#B,\emptyset] \\
r7 & = & [(\# \to \lambda), (\# \to \lambda),\#\#,A] 
\end{eqnarray*}
    \caption{The simulation of non-context-free erasing rules of a grammar in $(3,2)$-GNF from Theorem~\ref{thm:scm-434726}.}
    \label{tab:scm-434726}
\end{table}}
\begin{theorem}\label{thm:scm-434726}
    $\mathrm{SCM}(4,3;4;7,2,6)=\mathrm{RE}$.
\end{theorem}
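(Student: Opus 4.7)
The plan is to take any $L\in\mathrm{RE}$, use Proposition~\ref{propos:(3,2)-GNF} to represent it via a $(3,2)$-GNF grammar $G=(V,T,P\cup\{AA\to\emptyword,BBB\to\emptyword\},S)$ with $N=\{S,A,B\}$, and construct the SCM grammar $G'=(V\cup\{\#\},T,P'\cup P'',S)$, where $P'$ contains each context-free rule of~$P$ as a single-rule unconditional matrix, and $P''=\{r1,\dots,r7\}$ is the list displayed above. The descriptional-complexity bounds can be read off directly: the longest permitting string is of length~$4$ (e.g.\ $A\#\#B$, $A\#\#A$, $B\#\#B$, $\#\#\#\#$), the longest forbidden string is of length~$3$ ($\#\#\#$ or $BBB$), giving degree~$(4,3)$; each matrix has exactly two rules; $\#$ is the only nonterminal added on top of those of~$G$, giving~$4$ in total; and among the seven conditional matrices exactly $r6$ has an empty forbidden component, so the number of non-simple matrices is~$6$.

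For the direction $L(G)\subseteq L(G')$, I will exhibit, for each of the two non-context-free rules of~$G$, a scripted sequence of matrix applications in~$G'$ that realises it on any sentential form in $L_{(3,2)}$ (see~\eqref{eq:SF-GNF-(3,2)}). The basic script for $AA\to\emptyword$ uses $r3$ followed by one of $r5,r6,r7$, dispatched according to the local context around the unique occurrence of $AA$: $r6$ when that $AA$ is flanked by $B$'s (the generic case, since the prefix factor of $L_{(3,2)}$ ends in~$B$ and the suffix factor begins with~$B$), $r5$ for an $A$-flanked context (which may arise after a preceding $BBB$-deletion leaves an $ABA$ middle), and $r7$ when all remaining Phase~1 material has been consumed. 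The script for $BBB\to\emptyword$ opens with $r1$, whose forbidden symbol $\#$ (shared with $r3$) prevents a second simulation from being started in parallel. The two $\#$-marks thus produced are then shifted into the correct positions and finally erased through a coordinated use of $r2,r4,r6,r7$; the precise combination depends on whether $r1$ produced $B\#\#$, $\#B\#$, or $\#\#B$ and on the surrounding context supplied by the $(3,2)$-GNF structure.

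For the converse inclusion $L(G')\subseteq L(G)$, I will prove by induction on the length of a derivation $S\Ra_{G'}^* w$ an invariant analogous to those of Theorems~\ref{thm:sscm-21532}, \ref{thm:sscm-31522} and~\ref{thm:sscm-31433}: either (a) $S\Ra_G^* w$ (in particular $w\in L_{(3,2)}$), or (b) $w$ contains a prescribed constellation of $\#$'s encoding a partially executed deletion script, and there is a companion sentential form $w'$ obtained from~$w$ by replacing the $\#$-block by the corresponding $A$'s or $B$'s, with $S\Ra_G^* w'$. The inductive step distinguishes the applied matrix: unconditional matrices reduce to context-free steps of~$G$ on~$w$ (case~(a)) or on~$w'$ (case~(b)); each $r_i$ is then shown to either advance the invariant or be inapplicable by its context conditions.

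The main obstacle is excluding parasitic interleavings of the six non-simple matrices, which may in principle fire at many points throughout Phases~1 and~2. The decisive observation is that the permitting patterns $A\#\#A$, $B\#\#B$, $A\#\#B$, $\#\#\#\#$ pin down the local context around any $\#$-block, while the forbidden patterns $\#$ (on $r1$ and $r3$), $\#\#\#$ (on $r2$), $\#B$ (on $r4$), $BBB$ (on $r5$), and $A$ (on $r7$) jointly forbid (i)~starting a new simulation while one is already underway, (ii)~crossing a $\#$-marker with a stray unmarked $B$, and (iii)~committing to the final $r7$-erasure while Phase~1 $A$'s still survive. Verifying that every remaining interleaving either corresponds to a legitimate $G$-derivation or deadlocks is the technical heart of the proof, and I expect this case analysis to be the most delicate step.
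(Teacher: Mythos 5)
Your parameter bookkeeping and the overall shape (same rule set, forward scripts plus an inductive soundness invariant) match the paper, but the way you wire the scripts together contains a step that would fail. In the intended simulation of $BBB\to\emptyword$, the chain is $\alpha ABBBA\beta \Ra_{r1} \alpha A\#\#BA\beta \Ra_{r2} \alpha A\#\#\#\#A\beta \Ra_{r4} \alpha A\#\#A\beta \Ra_{r5} \alpha AA\beta$; that is, $r5$ (permitting $A\#\#A$, forbidden $BBB$) is the \emph{only} matrix that can remove the final $\#$-block of the $BBB$-simulation, and $r6,r7$ belong to the separate $AA\to\emptyword$ script ($r3$ then $r6$ or $r7$). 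You instead assign $r5$ to the $AA$-script, where its permitting context $A\#\#A$ after $r3$ would require the substring $AAAA$, which never occurs in $L_{(3,2)}$ (and an ``$ABA$ middle'' contains no $AA$, so $r3$ could not even fire there), and you claim the $BBB$-marks are ``finally erased through a coordinated use of $r2,r4,r6,r7$.'' That chain deadlocks: after $r1;r2;r4$ the sentential form is $\alpha A\#\#A\beta$, on which $r2$ ($A\#\#B$), $r4$ ($\#\#\#\#$), $r6$ ($B\#\#B$) are inapplicable and $r7$ is blocked by the forbidden $A$. So the forward inclusion $L(G)\subseteq L(G')$ is not established by your scripts. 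Relatedly, your statement that a ``wrong'' placement of the two $\#$'s by $r1$ (e.g.\ $\#B\#$ or marks outside the centre) can be ``shifted into the correct positions'' is the opposite of how the construction works and of what soundness needs: $r2$'s permitting context $A\#\#B$ forces the marks onto the first two $B$'s of the $ABBBA$ centre, and all other placements are meant to get stuck, not repaired.

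The second gap is that the hard direction $L(G')\subseteq L(G)$ is only announced. The paper's argument there is a concrete case analysis over the Phase-2 centres $\zeta\in\{AA,ABBA,ABBBA,ABBBBA\}$, showing that each applicable conditional matrix forces an essentially deterministic continuation (e.g.\ in the $ABBBA$ case the sequence $r1,r2,r4,r5$ is enforced) and that the dangerous centre $ABBBBA$, which does satisfy $r1$'s permitting context $BBB$, leads only to dead ends, precisely because of the forbidden context $\#B$ in $r4$ and the lack of continuation after $r6$. Your proposal defers exactly this analysis (``I expect this case analysis to be the most delicate step''), and given the misassignment of $r5$ above, the invariant you would have to prove is not yet even correctly formulated. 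As it stands, the proposal identifies the right construction but does not prove either inclusion.
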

\begin{proof}
Let $L\in\mathrm{RE}$. 
We start with a type-0 grammar~$G$ in $(3,2)$-GNF that generates~$L$.
Hence, in Phase~1, when we only apply context-free rules which carry over as unconditional single-rule matrices, we can derive some string from 
$$\{ABB,AB\}^*\{S,\emptyword\}(\{BA,BBA\}\cup T)^*\subseteq L_{(3,2)}\,.$$
The two non-context-free erasing rules $BBB\to\emptyword$ and $AA\to\emptyword$ are \LV{meant to be }simulated by \LV{seven}\SV{7} conditional matrices; here, we employ a fourth nonterminal~$\#$, see Table~\SV{\ref{tab:two-simulations}}\LV{\ref{tab:scm-434726}}.

The intended simulation works as follows:
\begin{eqnarray*}
\alpha ABBBA \beta & \Ra_{r1} & \alpha A\#\#BA\beta \Ra_{r2} \alpha A\#\#\#\#A\beta \Ra_{r4} \alpha A\#\#A\beta \\
& \Ra_{r5} &  \alpha AA\beta \Ra_{r3} \alpha \#\#\beta \Ra_{r6/7} \alpha \beta
\end{eqnarray*}
Whether to apply matrices $r6$ or $r7$ in the last step depends on whether  $B$'s are ending~$\alpha$ and starting~$\beta$ or whether $\alpha=\beta=\emptyword$.

\SV{The formal correctness proof of this construction is in the long version.}
\begin{toappendix}
\SV{\subsection{Correctness Proof of the Construction of Theorem~\ref{thm:scm-434726}}}
Let us now collect some further observations.
Notice that to a string derivable in Phase~1, at best matrices $r1$ or $r3$ would be applicable due to the absence of $\#$, but the required permitting substrings will not occur in Phase~1. Hence, we consider a string $w=\alpha\zeta\beta$ as available in Phase~2, i.e., $\alpha\in\{AB,ABB\}^*$, $\zeta\in\{AA,ABBA,ABBBA,ABBBBA\}$, $\beta\in(\{BA,BBA\}\cup T)^*$, see Eq.~\eqref{eq:SF-GNF-(3,2)}. Only if $\zeta=AA$ or if $\zeta=ABBBA$, the two non-context-free erasing rules $BBB\to\emptyword$ and $AA\to\emptyword$ are applicable; how they are actually meant to be simulated is explained alongside the analysis of cases~1 and~4 below.

\smallskip\noindent
\underline{Case 1: $\zeta=AA$.} The only applicable matrix is now $r3$, which will produce a string $w_1$ that is obtained from~$w$ by turning any two occurrences of~$A$ into~$\#$. If $\alpha=\emptyword$ and $\beta\in T^*$, then (and only then) matrix~$r7$ would be applicable, correctly allowing to let $w_1=\#\# \beta$ derive the terminal string~$\beta$, this way simulating $AA\to\emptyword$. By the structure of our encoding, matrix $r5$ is not applicable, as this would mean that~$w$ has contained the substring $AAAA$. Matrix~$r6$ is only applicable if $\zeta$ has been turned into $\#\#$ within~$w_1$, so that we arrive at $w_2=\alpha\beta$, which is again correctly simulating $AA\to\emptyword$.

\smallskip\noindent
\underline{Case 2: $\zeta=ABBA$.} It is not hard to check that now, no matrix is applicable.

\smallskip\noindent
\underline{Case 3: $\zeta=ABBBA$.} Now, matrix~$r1$ is applicable (as intended) and will turn two occurrences of $B$ into~$\#$. Now, all potentially continuing matrices contain $\#\#$ in their permitting context checks, i.e., two $B$'s next to each other have been turned into~$\#\#$ when producing the resulting string~$w_1$. By the structure of our encoding and in the present case~3, we cannot see the substring $B\#\#B$ within~$w_1$, as this would require the substring~$B^4$ in~$w$. In order to get the substring $A\#\#A$ (for applying matrix $r5$), we have to touch two $B$ in $\alpha\beta$ and none in $\zeta$, so that the substring $BBB$ is maintained in~$w_1$, hence preventing us from applying~$r5$. Also, to a string obtained from $w$ without changing the center~$\zeta$, although $BBB$ would be still present, we cannot re-apply $r1$ due to the presence of~$\#$. 
Hence, we have to continue with matrix~$r2$, having arrived at $w_1=\alpha A \#\#BA\beta$ to pass the substring tests. Now, matrix $r2$ will delete one occurrence of $\#$ and turn any occurrence of $B$ into $\#\#\#$ to produce~$w_2$. As the reader can verify, now the only potentially applicable matrix is $r4$, which means that $w_2=\alpha A \#^4A\beta$. The string obtained from applying $r4$ is then $w_3=\alpha A \#\#A\beta$. Now, matrix $r5$ must be applied, leading to  $w_4=\alpha A A\beta$. This means that the deletion rule $BBB\to\emptyword$ was successfully simulated.

\smallskip\noindent
\underline{Case 4: $\zeta=ABBBBA$.}
As in case~3, we can argue that from~$w$, we must continue with $w_1=\alpha A\#\#BBA\beta$ or with $w_1'=\alpha AB \#\# BA\beta$. With $w_1$, we might still get $w_2=\alpha A\#\#\#\#BA\beta$, but now the derivation is stuck, because $r4$ has the forbidden context $\#B$. From $w_1'$, we can get $w_2'=\alpha ABBA\beta$ by using matrix $r6$, but now there is no continuation as in case~2. Hence, we cannot successfully derive any terminal string in this case.

Now, by a straightforward formal induction argument, one can see that the languages of~$G$ and of~$G'$ are equal.
\end{toappendix}
\qed    
\end{proof}

Now we get a trade-off result, as the construction above has uncomparable degree but less non-simple matrices compared to Theorem~\ref{thm:scm-524724}. 

\begin{theorem}\label{thm:scm-524724}
$\mathrm{SCM}(5,2;4;7,2,4)=\mathrm{RE}$.
\end{theorem}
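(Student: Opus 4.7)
The plan is to mirror the $(3,2)$-GNF simulation from Theorem~\ref{thm:scm-434726} using the alternative matrix set shown on the right of Table~\ref{tab:two-simulations}. Given $L \in \mathrm{RE}$ generated by $G=(V,T,P\cup\{AA\to\emptyword,BBB\to\emptyword\},S)$ in $(3,2)$-GNF, I would define $G'=(V\cup\{\#\},T,P'\cup P'',S)$, where $P'$ contains the single-rule unconditional matrices $[(S\to\alpha),\emptyset,\emptyset]$ for each $S\to\alpha\in P$, and $P''=\{r1,\dots,r7\}$. A direct count shows that $G'$ has degree $(5,2)$, four nonterminals, seven conditional matrices of length at most two, and exactly four non-simple matrices (namely $r1, r2, r4, r7$), matching the claimed parameters.

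For $L(G)\subseteq L(G')$ I would exhibit the intended simulations. The rule $BBB\to\emptyword$ applied to a substring $ABBBA$ inside a sentential form $\alpha ABBBA\beta$ is simulated by the chain
\[
\alpha ABBBA\beta \Ra_{r1} \alpha A\#AABA\beta \Ra_{r2} \alpha A\#AA\#\#\#A\beta \Ra_{r3}^{2} \alpha AAAA\beta \Ra_{r4} \alpha\#AA\#\#\#\beta \Ra_{r3}^{2} \alpha AA\beta,
\]
where at each step the nondeterministic position choices are made so as to preserve the substring needed by the next matrix; in particular, the $r4$-step turns the first and the fourth $A$ of $AAAA$ into $\#$ and $\#\#\#$, respectively, in order to re-create a $\#AA\#$ pattern. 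The rule $AA\to\emptyword$ applied to $\alpha AA\beta$ is simulated by $r4$, $r5$, followed by either $r6$ (when the surviving $\#\#$ is surrounded by $B$'s, which is guaranteed whenever $AA$ was created by an earlier $BBB\to\emptyword$ between two nonterminal pieces of $(3,2)$-GNF) or $r7$ (when all remaining $A$'s have already been eliminated, i.e., in the final cleanup step producing the terminal word).

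The reverse inclusion $L(G')\subseteq L(G)$ is the technical core. I would proceed by induction on the derivation length in $G'$, carrying an invariant that every reachable~$w$ either is a $G$-sentential form or is an \emph{intermediate} coded form obtained from a $G$-sentential form $w'$ by partial execution of one of the two simulation chains above; in the latter case the $\#$'s pinpoint the ongoing simulation step so that $w$ decodes uniquely to~$w'$. For each step, the conditional matrices are analysed individually and shown either to stay within the list of intermediate forms or to return to a genuine $G$-sentential form one $G$-step further along. The main obstacle is ruling out the many \emph{unintended} firings: a careful reading of the forbidden contexts ($AA$ in $r1$ locking the $B$-rewrites into the target $BBB$, $\#\#$ in $r2$ forcing $r2$ to fire immediately after $r1$ with a single $\#$ present, $\#$ in $r4$ forbidding premature $A$-recoding, and $A$ in $r7$ restricting $r7$ to the final terminal-only configuration) together with the precise permitting windows ($A\#AAB$ of $r2$, $\#AA\#$ of $r3$, $\#\#\#\#$ of $r5$, $B\#\#B$ of $r6$) should show that every deviant branch becomes blocked with $\#$'s still present and therefore cannot produce a terminal word. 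The case analysis is schematic and parallels that of Theorem~\ref{thm:scm-434726}.
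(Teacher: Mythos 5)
Your construction is exactly the one the paper uses (the right-hand matrix set of Table~\ref{tab:two-simulations}), your parameter count (including the four non-simple matrices $r1,r2,r4,r7$) is correct, and your intended derivation chains — in particular routing the $BBB\to\emptyword$ simulation through $\alpha AAAA\beta$ and then using $r4$ with the distance-two placement $\#AA\#\#\#$ followed by two applications of $r3$ — coincide with the paper's own case analysis, as does your inductive, case-by-case blocking argument for $L(G')\subseteq L(G)$. So the proposal is correct and takes essentially the same approach as the paper, differing only in that the detailed rejection of unintended firings is sketched rather than spelled out.
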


\begin{proof}
Let $L\in\mathrm{RE}$. 
We again start with a type-0 grammar~$G$ in $(3,2)$-GNF that generates~$L$.
Compared to the previous construction, only the matrices simulating the two non-context-free erasing rules $BBB\to\emptyword$ and $AA\to\emptyword$ change\LV{ into the following ones}\SV{ as can be seen in Table~\ref{tab:two-simulations}}.
\LV{\begin{eqnarray*}
r1 &=& [(B\to\#),(B\to AA), ABBBA,AA] \\
r2 & = & [(B \to \#\#\#),A\#AAB,\#\#] \\
r3 & = & [(\# \to \emptyword),(\# \to \emptyword), \#AA\#,\emptyset]\\
r4 &=& [(A\to\#),(A\to\#\#\#),AA,\#]\\
r5 &=& [(\#\to\emptyword),(\#\to\emptyword),\#\#\#\#,\emptyset]
\\
r6 & = & [(\# \to \emptyword), (\# \to \emptyword), B\#\#B,\emptyset]
\\
r7 & = & [(\# \to \emptyword), (\# \to \emptyword),\#\#,A]
\end{eqnarray*}}
The intended simulation works as follows:
\begin{eqnarray*}
\alpha ABBBA \beta & \Ra_{r1} & \alpha A\#AABA\beta \Ra_{r2} \alpha A\#AA\#\#\#A\beta \Ra_{r3} \alpha \#AA\#\beta \\
& \Ra_{r3} &  \alpha AA\beta \Ra_{4} \alpha \#\#\#\#\beta \Ra_{r5}\alpha \#\#\beta  \Ra_{r6/7} \alpha \beta
\end{eqnarray*}
Whether to apply matrices $r6$ or $r7$ in the last step depends on whether  $B$'s are ending~$\alpha$ and starting~$\beta$ or whether $\alpha=\beta=\emptyword$.
\SV{The formal correctness proof of this construction is in the long version.}
\begin{toappendix}
\SV{\subsection{Correctness Proof of the Construction of Theorem~\ref{thm:scm-524724}}}

Notice that to a string derivable in Phase~1, at best matrices $r1$ or $r4$ would be applicable due to the absence of $\#$, but the required permitting substrings will not occur in Phase~1. Hence, we consider a string $w=\alpha\zeta\beta$ as available in Phase~2, i.e., $\alpha\in\{AB,ABB\}^*$, $\zeta\in\{AA,ABBA,ABBBA,ABBBBA\}$, $\beta\in(\{BA,BBA\}\cup T)^*$. Only if $\zeta=AA$ or if $\zeta=ABBBA$, the two non-context-free erasing rules $BBB\to\emptyword$ and $AA\to\emptyword$ are applicable; how they are actually meant to be simulated is explained alongside the analysis of cases~1 and~4 below.

\smallskip\noindent
\underline{Case 1: $\zeta=AA$.} The only applicable matrix is now $r4$, which will produce a string $w_1$ that is obtained from~$w$ by turning any two occurrences of~$A$ into~$\#$ and $\#\#\#$, respectively. The presence of the substring $\#\#\#$ in~$w_1$ directly blocks applying $r2$ and $r4$. The substring $\#\#$ (but not $\#\#\#$) is required to be present for applying matrix
~$r6$. By the structure of strings in Phase~2, the presence of substring $AA$ in~$w$ implies the absence of $BBB$ in~$w$ and hence in $w_1$, so that matrix~$r1$ is also inapplicable. If matrix~$r3$ would be applicable now, this would mean that~$w$ had to contain $AAAA$, which is impossible.
Hence, only $r5$ or $r7$ might apply. In both cases, actually $\zeta$ must have been replaced by $\#\#\#\#$ when producing~$w_1$, so that we know that the resulting string~$w_2$ equals $\alpha \#\#\beta$. In order to continue with~$w_2$, we have to apply $r6$ or $r7$, differentiating between a further continuation or a possible termination of the derivation process. 

\smallskip\noindent
\underline{Case 2: $\zeta=ABBA$.} It is not hard to check that now, no matrix is applicable.

\smallskip\noindent
\underline{Case 3: $\zeta=ABBBA$.} By the structure of sentential forms in Phase~2, only matrix~$r1$ is applicable and will turn one occurrence of~$B$ into~$\#$ and another one into~$AA$, this way leading to the string~$w_1$. As there is (only) one occurrence of~$\#$, as well as one of~$AA$, within~$w_1$, now only matrix~$r2$ can be applicable. This is only possible if, within the center~$\zeta$, the first occurrence of~$B$ was turned into~$\#$ and the second one into~$AA$, i.e., if $w_1=\alpha A\#AABA\beta$. In the resulting string~$w_2$, compared to~$w_1$, 
any occurrence of $B$ has been turned into~$\#\#\#$. In the presence of the substring~$\#\#\#$ within~$w_2$, but not $\#^4$, only matrix~$r3$ 
might be applicable. 
As $\alpha$ ends with a~$B$, the permitting context $\#AA\#$ implies that $w_2=\alpha A\#AA\#\#\#A\beta$, leading to $w_3=\alpha A\#AA\#A\beta$ and then $w_4=\alpha AAAA\beta$ deterministically by similar arguments.
Now, the reasoning presented in case~1 is by large applicable up to the conclusion that on~$w_4$, matrix~$r4$ must be applied, leading to a string~$w_5$ where two neighboring occurrences of~$A$ have been turned into $\#\#\#\#$ so that (1) $w_5=\alpha \#\#\#\#AA\beta$ or  (2) $w_5=\alpha A\#\#\#\#A\beta$ or (3) $w_5=\alpha AA\#\#\#\#\beta$. The alternative is to convert two occurrences of $A$ at distance~2, leading to (4) $w_5=\alpha \# AA\#\#\#\beta$ or (5) $w_5=\alpha \#\#\# AA\#\beta$; with (4) or (5), the derivation could continue with using matrix~$r3$ twice as explained above, leading to  $w_7=\alpha AA\beta$.  
In cases (1)-(3), only matrix~$r5$ is applicable, leading to  $w_6=\alpha \#\#AA\beta$ or  $w_6=\alpha A\#\#A\beta$ or $w_6=\alpha AA\#\#\beta$. However, now the derivation is stuck.
In conclusion, string~$w=\alpha ABBBA\beta$ will be necessarily converted into $w_7$ which correctly simulates the erasing rule $BBB\to\emptyword$.

\smallskip\noindent
\underline{Case 4: $\zeta=ABBBBA$.} Again, no matrix is applicable.
Now, by a straightforward formal induction argument, one can see that the languages of~$G$ and of~$G'$ are equal.
\end{toappendix}
\qed    
\end{proof}

\noindent
We are now presenting yet another trade-off result, now based on sMMNF.

\SV{\begin{table}[tb]
    \centering
    \begin{minipage}{.48\textwidth}
\small \begin{eqnarray*}
r1 &=& [(S\to \$\$),0S,\emptyset]\\
r2 &=& [(0\to \$\$\$),(0\to \$),0\$\$0,\$\$\$]\\
r3 &=& [(1\to \$\$\$),(1\to \$),1\$\$1,\$\$\$]\\
r4 &=& [(\$\to\emptyword),(\$\to\emptyword),\$^6,\emptyset]\\
r5 &=& [(\$\to\emptyword),(\$\to\emptyword),0\$^40,\emptyset]\\
r6 &=& [(\$\to\emptyword),(\$\to\emptyword),1\$^41,0\$]\\
r7 &=& [(\$\to\emptyword),(\$\to\emptyword),\emptyset,1]
\end{eqnarray*}
    \end{minipage} 
    \quad\,   
\begin{minipage}{.47\textwidth}
\small   \begin{eqnarray*}
r1 &=& [(S\to SS),0S,SS]\\
r2 &=& [(0\to SSS),(0\to S),0SS0,SSS]\\
r3 &=& [(1\to SSS),(1\to S),1SS1,SSS]\\
r4 &=& [(S\to\emptyword),(S\to\emptyword),S^6,\emptyset]\\
r5 &=& [(S\to\emptyword),(S\to\emptyword),0S^40,\emptyset]\\
r6 &=& [(S\to\emptyword),(S\to\emptyword),1S^41,0S]\\
r7 &=& [(S\to\emptyword),(S\to\emptyword),\emptyset,1]
\end{eqnarray*}
    \end{minipage}\\[1ex]
    \caption{Simulating sMMNF in two (related) ways: Theorem~\ref{thm:scm-634723} for $\mathrm{SCM}(6,3;4;7,2,3)$ on the left and Theorem~\ref{thm:scm-633-24} for $\mathrm{SCM}(6,3;3;*,2,4)$ on the right.}
    \label{tab:another-two-simulations}
\end{table}}

\begin{theorem}\label{thm:scm-634723}
$\mathrm{SCM}(6,3;4;7,2,3)=\mathrm{RE}$.
\end{theorem}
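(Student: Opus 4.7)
The plan is to show the nontrivial direction $\mathrm{RE}\subseteq\mathrm{SCM}(6,3;4;7,2,3)$ by simulating an sMMNF grammar; the reverse inclusion is routine. Given $L\in\mathrm{RE}$, I would fix an sMMNF grammar $G=(V,T,P\cup\{0\$0\to\$,1\$1\to\$,\$\to\emptyword\},S)$ for $L$ as in Proposition~\ref{propos:MMNF}, with $u\in\{10,100\}^+$ and $v\in\{01,001\}^*$. I would then build the SCM grammar $G'=(V,T,M,S)$ whose matrix set~$M$ contains one unconditional single-rule matrix for each CF rule $S\to uSa$ or $S\to uSv$ of~$P$ (but \emph{not} for $S\to u\$v$), together with the seven conditional matrices $r_1,\dots,r_7$ from the left part of Table~\ref{tab:another-two-simulations}. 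The central encoding trick is to represent sMMNF's single marker~$\$$ in $G'$ by the doubled string $\$\$$: the omitted CF rule $S\to u\$v$ is replayed by first applying $S\to uSv$ as a CF matrix and then $r_1$, which replaces $S$ by $\$\$$; the permitting subword $0S$ is enabled after any CF step, since $u$ always ends in~$0$.

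The resource bounds follow by inspection: four nonterminals $\{S,0,1,\$\}$; seven conditional matrices of length at most two; longest permitting subwords $0\$^40$, $1\$^41$, $\$^6$ of length~$6$; longest forbidden subword $\$\$\$$ of length~$3$ (giving degree $(6,3)$); and exactly three non-simple matrices, namely $r_2$, $r_3$, and $r_6$. For the forward inclusion $L(G)\subseteq L(G')$, each shrinking step $0\$0\to\$$ translates into $0\$\$0\to\$\$$ and is realised by the sequence $r_2\Ra r_4\Ra(r_5\text{ or }r_6)$ acting on $\alpha\,0\$\$0\,\beta$; symmetrically, $1\$1\to\$$ is realised by $r_3\Ra r_4\Ra(r_5\text{ or }r_6)$. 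The choice between $r_5$ and $r_6$ is forced by whether the current~$\alpha$ ends and $\beta$ starts with $0$ or with $1$. Finally, $\$\to\emptyword$ becomes $\$\$\to\emptyword$ and is executed by $r_7$, whose forbidden context~$1$ ensures it fires only once shrinking has consumed every remaining nonterminal~$1$ (which, by the sMMNF structure of $\alpha\in\{10,100\}^+$, happens precisely at the successful end of the simulation).

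The harder direction $L(G')\subseteq L(G)$ I would prove by induction on derivation length, in the spirit of Theorems~\ref{thm:sscm-21532} and~\ref{thm:sscm-31522}. The invariant to maintain is that every sentential form~$w$ reachable from~$S$ in~$G'$ is either (a)~a sentential form of~$G$ after reading $\$\$$ as~$\$$, or (b)~one of a small list of transient forms $\alpha\$^6\beta$ and $\alpha\$^4\beta$ with prescribed neighbourhoods arising mid-simulation of a shrinking step with a forced canonical continuation back to case~(a), or (c)~a dead-end from which no terminal word is derivable. The hard part will be verifying~(c) exhaustively across all spurious firings; I foresee three critical subcases. First, a premature $r_1$ fired during Phase~$1$ leaves $u\$\$a$ with $a\in T$, where none of the permitting subwords $0\$\$0$, $1\$\$1$, $\$^6$, $0\$^40$, $1\$^41$ becomes available and $r_7$ is blocked by the $1$'s inside~$u$. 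Second, an $r_2$ or $r_3$ firing that touches $0$'s or $1$'s not both adjacent to the central $\$\$$ scatters the produced $\$\$\$$ and~$\$$, so that no $\$^6$ block is ever assembled, while the forbidden context $\$\$\$$ simultaneously freezes further $r_2$/$r_3$ firings. Third, $r_4,r_5,r_6$ can fire only where the strict permitting contexts $\$^6$, $0\$^40$, $1\$^41$ occur, and the forbidden context $0\$$ in $r_6$ prevents it from misfiring where $r_5$ is intended; the complementary role of~$r_7$ on $\$^4$-blocks in the absence of~$1$ corresponds exactly to the ``penultimate'' $1\$1\to\$$ step followed by $\$\to\emptyword$ enforced by sMMNF. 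Ruling out each dead-end in turn pins the successful derivations of $G'$ down to be in bijection with those of~$G$, yielding $L(G')=L$.
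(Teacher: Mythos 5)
Your proposal coincides with the paper's own construction: the same sMMNF starting point, the identical seven matrices $r1,\dots,r7$ (the left column of Table~\ref{tab:another-two-simulations}), the encoding of the marker $\$$ as $\$\$$ with $S\to u\$v$ replayed via $S\to uSv$ followed by $r1$ and $\$\to\emptyword$ via $r7$, the same resource count (degree $(6,3)$, four nonterminals, seven conditional binary matrices, non-simple ones exactly $r2,r3,r6$), and the same style of converse argument by induction with a dead-end analysis of misfired matrices. The only nuance is that the forbidden context $0\$$ in $r6$ is there chiefly to kill derivations in which $r2$ converted two zeros away from the centre (creating a spurious $1\$^41$), rather than to arbitrate between $r5$ and $r6$, but your case analysis covers this situation, so the argument is essentially the paper's proof.
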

\begin{proof} We start with an sMMNF grammar~$G$ for a given arbitrary RE language $L\subseteq T^*$, i.e., $G$ is a type-0 grammar of the form $G= (V, T, P\cup \{0\$0\to \$, 1\$1\to \$,\$\to\emptyword\}, S)$ with $L(G)=L$ such that $P$ contains only context-free rules and $N\coloneqq V\setminus T=\{S, 0, 1,\$\}$.
We suggest the\LV{ following} simulation rules for $0\$0\to\emptyword$ and for $1\$1\to\emptyword$ in the SCM grammar $G'$\SV{ as given in Table~\ref{tab:another-two-simulations}}.
\LV{
\begin{eqnarray*}
r1 &=& [(S\to \$\$),0S,\emptyset]\\
r2 &=& [(0\to \$\$\$),(0\to \$),0\$\$0,\$\$\$]\\
r3 &=& [(1\to \$\$\$),(1\to \$),1\$\$1,\$\$\$]\\
r4 &=& [(\$\to\emptyword),(\$\to\emptyword),\$^6,\emptyset]\\
r5 &=& [(\$\to\emptyword),(\$\to\emptyword),0\$^40,\emptyset]\\
r6 &=& [(\$\to\emptyword),(\$\to\emptyword),1\$^41,0\$]\\
r7 &=& [(\$\to\emptyword),(\$\to\emptyword),\emptyset,1]
\end{eqnarray*}}

The context-free rules of~$G$ are simulated by unconditional matrices in~$G'$ except for (a) the rules $S\to u\$v$ in~$G$ that will be simulated by $S\to uSv$ and then $r1$, as well as for (b) $\$\to\emptyword$ in~$G$ that is simulated by $r7$.
We briefly explain the idea behind.
Notice that for each rule $S\to u\$v$ in~$G$, there is also a rule $S\to uSv$ in~$G$. The reason why Geffert (and hence also MM) normal forms have this specific feature (and not, as one might expect, directly the rule $S\to\$$) is that otherwise, $\emptyword\in L(G)$ would be always true. Another way to avoid the trivial derivation $S\Ra \$ \Ra \emptyword$ in an MMNF grammar would be to check, when  $S\to\$$ is applied, that indeed some of the rules $S\to uSv$ have been applied before. This is the purpose of the permitting context $0S$ in matrix~$r1$. Therefore, we find $S\Ra_G^*\alpha S\beta\Ra_G\alpha \$\beta$ if and only if $S\Ra_{G'}\alpha S\beta\Ra_{G'}\alpha \$\$\beta$. 
(b) $\$\to\emptyword$ is meant to be the very last rule applied in~$G$; in $r7$, this is explicitly checked with the help of the forbidden context~1. This check makes use of the fact that we started with a grammar~$G$ in strong MMNF.\LV{ We come back to this issue later.

Notice that $G'$ will use no additional nonterminals compared to~$G$. Also, the claimed descriptional complexity measure of~$G'$ is easily verified.

}
We now present intended derivations.
\begin{eqnarray*}
\alpha 0\$\$0\beta &\Ra_{r2}&\alpha \$^6\beta \Ra_{r4}\alpha \$^4\beta \Ra_{r5/r6}\alpha \$\$\beta\,,\\ 
\alpha 1\$\$1\beta &\Ra_{r3}&\alpha \$^6\beta \Ra_{r4}\alpha \$^4\beta \Ra_{r5/r6/r7}\alpha \$\$\beta\,, \end{eqnarray*} 
Which matrix to apply in the last step depends on the prefix~$\alpha$ and suffix~$\beta$. 
\SV{The formal correctness proof of this construction is in the long version.}
\begin{toappendix}
\SV{\subsection{Correctness Proof of the Construction of Theorem~\ref{thm:scm-634723}}}

As can be seen from Eq.~\eqref{eq:SF-sMMNF}, a `typical word' in Phase~2 is of the form $w=\alpha \$\beta$, with $\alpha\in \{10,100\}^*\{1,10,\emptyword\}$ and $\beta\in \{1,01,\emptyword\}(\{01,001\}\cup T)^*$. Instead, in the simulating grammar~$G'$, we have $w_0=\alpha \$\$\beta$. The `good cases' that we expect in the center are the substrings $0\$(\$)0$ and  $1\$(\$)1$. Both cases are treated nearly completely symmetrically. Hence consider $w_0=\alpha' 0\$\$0\beta'$ in what follows. By the structure of the encoding of strong MMNF, $\alpha'$ is not empty, as it starts with a~1, and similarly, $\beta'\neq\emptyword$, as it ends with a~1. Hence, $r7$ is not applicable.
(This is the one place where the argument differs with the center   $1\$(\$)1$.) Hence, we have to apply matrix $r2$. This yields~$w_1$, being different from $w$ in the sense that one occurrence of~0 was turned into~$\$$ and another one into~$\$\$\$$. Now, we cannot apply $r5$, as this would mean that in~$w$, we find four zeros in a row (observe that $\$\$\$\$$ can only encode two occurrences of~0 next to each other). We might try to apply $r6$, as~$w$ can contain a substring $1001$ somewhere. However, then the center still contains $0\$$, so that in fact $r6$ is inapplicable. (This is the other place where the argument differs with the center   $1\$(\$)1$, because $w_0$ cannot contain the substring $11$ at all, so that the forbidden context for $r5$ is not necessary.) Therefore, $r4$ has to be applied, which means that $w_1=\alpha' \$\$\$\$\$\$\beta'$. The resulting string is $w_2=\alpha' \$\$\$\$\beta'$. Now, applying $r5$ or $r6$, depending on the suffix of $\alpha'$ and the suffix of $\beta'$, is enforced.
(As the strings $\alpha'$ and $\beta'$ could be empty for the center  $1\$\$1$ in $w_0$, we might apply $r7$ instead then.)
This shows that the erasing non-context-free rules are correctly simulated. 

Recall that in order to terminate, one can apply $r7$ at the end, which simulates the context-free erasing rule $\$\to\emptyword$ of the MMNF grammar.

Straightforward induction shows that the languages of the original sMMNF grammar~$G$ and of the designed SCM grammar~$G'$ coincide.
\end{toappendix}
\qed
\end{proof}
\begin{remark}
The reader might have wondered if one could not save one of the matrices by opting for a strong version of MMMNF instead of sMMNF. However, then the forbidden context check in $r6$ seems to be impossible. Therefore, this idea would not work.
\end{remark}

\section{\LV{Computational Complete SCM Grammars with only Three Nonterminals}\SV{SCM Grammars: When Three Nonterminals Suffice}}

In 
this section, we ask ourselves if we could try to obtain computational completeness results with three nonterminals only. Indeed, we can achieve this, as we show in the following, but at the expense of having an unbounded number of (simple) conditional matrices. Otherwise, the idea is simply to modify the previous construction and to re-use the nonterminal~$S$ also in Phase~2. More precisely, the role of the central $\$$ is now taken by the substring $SS$ whose absence must hence be checked in each simulation of a context-free rule. A bit surprisingly, this idea does not increase the degree of the simulating grammar in comparison with the previous Theorem~\ref{thm:scm-634723}.

\SV{It is interesting to compare this result with that on (classical) matrix grammars where it is known that three nonterminals suffice to characterize $\mathrm{RE}$, see~\cite{Feretal07} where the lengths of the matrices are arbitrarily big.}
\LV{These results should also be compared with the following two nonterminal complexity results:
\begin{itemize}
    \item For matrix grammars, it is known that three nonterminals suffice to characterize $\mathrm{RE}$, see \cite{Feretal07}.
    \item For SSCG, again three  nonterminals suffice to characterize $\mathrm{RE}$.\footnote{Unpublished result, by a group of researchers from Singapore, as well as from Germany and New Zealand.}
\end{itemize}}

\begin{theorem}\label{thm:scm-633-24}
$\mathrm{SCM}(6,3;3;*,2,4)=\mathrm{RE}$.
\end{theorem}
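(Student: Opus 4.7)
The plan is to mimic the construction used for Theorem~\ref{thm:scm-634723}, but to dispense with the fourth nonterminal $\$$ by letting the two-letter string $SS$ play the role of the middle marker. Starting from an sMMNF grammar $G=(V,T,P\cup\{0\$0\to\$,1\$1\to\$,\$\to\emptyword\},S)$ for~$L$, I would build an SCM grammar $G'$ over the same terminal alphabet but with only three nonterminals $\{S,0,1\}$. Each context-free rule $S\to\alpha$ of~$G$ is carried over as a single-rule simple matrix $[(S\to\alpha),\emptyset,SS]$; the forbidden context $SS$ prevents these rules from firing once the erasure simulation has begun. Each Phase-2 rule $S\to u\$v$ of~$G$ is simulated by first applying the (also present) Phase-2 rule $S\to uSv$ and then the non-simple matrix $r1=[(S\to SS),0S,SS]$, whose permitting context $0S$ blocks the degenerate derivation $S\Ra SS\Ra\emptyword$, and whose forbidden context $SS$ prevents $r1$ from firing more than once.

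The two non-context-free erasing rules $0\$0\to\$$ and $1\$1\to\$$ are simulated by matrices $r2$--$r6$ obtained from those in Theorem~\ref{thm:scm-634723} by literally substituting $S$ for $\$$, and $\$\to\emptyword$ is simulated by $r7=[(S\to\emptyword),(S\to\emptyword),\emptyset,1]$. The claimed descriptional parameters $(6,3;3;*,2,4)$ can then be read off directly: the longest permitting context is $S^6$ in~$r4$, the longest forbidden context is $SSS$ in $r2,r3$; the nonterminals are $S,0,1$; every matrix has at most two rules; exactly $r1,r2,r3,r6$ carry both a permitting and a forbidden condition, giving $s=4$; all other conditional matrices, including the (unboundedly many) context-free simulating matrices, are simple.

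For correctness I would follow the same inductive template as in Theorem~\ref{thm:scm-634723}: if $S\Ra_{G'}^*w$, then either $w$ corresponds, under the coding $SS\leftrightarrow\$$, to a sentential form of~$G$, or $w$ is a partially simulated intermediate that can only be completed by the (essentially deterministic) chain $r2/r3\Ra r4\Ra r5/r6/r7$ to the next coded sentential form. The intended chains are exactly the ones displayed before Theorem~\ref{thm:scm-634723}, with $\$$ replaced by $S$.

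The principal obstacle, and the place where the proof genuinely differs from that of Theorem~\ref{thm:scm-634723}, is to verify that recycling~$S$ as part of the middle marker does not introduce spurious derivations. Two points need care. First, no sentential form obtained by applying only the context-free simulating matrices can contain $SS$ as a subword, because in sMMNF every occurrence of~$S$ produced by a context-free rule is flanked by symbols from $\{0,1\}$; hence $r2$--$r7$ are inapplicable outside the intended erasure simulation, and the forbidden context $SS$ attached to the context-free matrices remains satisfied. Second, once $r1$ has created the substring $SS$, every further application of a context-free matrix is blocked by its forbidden context $SS$, and the forbidden contexts $SSS$ in $r2,r3$ together with the permitting contexts $0SS0$, $1SS1$, $S^6$, $0S^40$, $1S^41$ force the erasure to proceed exactly in the order $r1\to r2/r3\to r4\to r5/r6/r7$, faithfully realizing the targeted non-context-free rule and restoring an $SS$-marker (or deleting it, in the case of $r7$). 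A routine induction on derivation length then yields $L(G')=L(G)$.
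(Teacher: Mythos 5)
Your construction coincides with the paper's: the same substitution of $SS$ for the middle marker $\$$, the same matrices $r1$--$r7$ with the forbidden context $SS$ added to $r1$ and to the context-free simulating matrices, the same parameter count (degree $(6,3)$, three nonterminals, unboundedly many conditional matrices, matrix length $2$, $s=4$ via $r1,r2,r3,r6$), and the same key observation that $r2$--$r7$ need at least two occurrences of $S$ and hence cannot fire during the context-free phase, so the correctness argument of Theorem~\ref{thm:scm-634723} carries over. This is essentially the paper's proof, and it is correct.
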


\begin{proof} 
We again start with a grammar~$G$ in sMMNF that generates an arbitrary RE language  $L\subseteq T^*$. This means that $G$
a type-0 grammar of the form $G= (V, T, P\cup \{0\$0\to \$, 1\$1\to \$,\$\to\emptyword\}, S)$ with $L(G)=L$ such that $P$ contains only context-free rules and $N\coloneqq V\setminus T=\{S, 0, 1,\$\}$.
Any context-free rule $S\to\gamma$ of~$G$ is simulated by the simple conditional matrix $r_\gamma=[(S \to \gamma), \emptyset, SS]$. 
The simulation rules for $0\$0\to\emptyword$ and for $1\$1\to\emptyword$ in the SCM grammar~$G'$ are presented in \SV{Table~\ref{tab:another-two-simulations}}\LV{the following}.
\LV{
\begin{eqnarray*}
r1 &=& [(S\to SS),0S,SS]\\
r2 &=& [(0\to SSS),(0\to S),0SS0,SSS]\\
r3 &=& [(1\to SSS),(1\to S),1SS1,SSS]\\
r4 &=& [(S\to\emptyword),(S\to\emptyword),S^6,\emptyset]\\
r5 &=& [(S\to\emptyword),(S\to\emptyword),0S^40,\emptyset]\\
r6 &=& [(S\to\emptyword),(S\to\emptyword),1S^41,0S]\\
r7 &=& [(S\to\emptyword),(S\to\emptyword),\emptyset,1]\\
\end{eqnarray*}}
None of the matrices $r2$ through $r7$ can be applied already when simulating context-free rules, because each matrix requires the presence of (at least) two occurrences of~$S$ one way or the other. Therefore, the arguments from the proof of Theorem~\ref{thm:scm-634723} nearly literally translate to this case.  
\qed
\end{proof}

\SV{\noindent}
We  now present another trade-off result\SV{, with uncomparable degrees}\LV{: while the permitting degree increases, but the forbidden string length decreases, as well as the number of non-simple conditional matrices}.
\begin{theorem}\label{thm:scm-723-23}
    $\mathrm{SCM}(7,2;3;*,2,3)=\mathrm{RE}$.
\end{theorem}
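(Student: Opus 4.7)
The plan is to mirror the construction of Theorem~\ref{thm:scm-633-24} while rebalancing the degree parameters. Starting from an sMMNF grammar $G=(V,T,P\cup\{0\$0\to\$,\,1\$1\to\$,\,\$\to\emptyword\},S)$ generating the arbitrary $\mathrm{RE}$ language $L\subseteq T^*$, I will construct an SCM grammar $G'$ whose set of nonterminals is exactly $\{S,0,1\}$, encoding~$\$$ as the substring~$SS$. Each context-free rule $S\to\gamma$ of~$G$ is then simulated by the simple conditional matrix $[(S\to\gamma),\emptyset,SS]$; such a matrix cannot fire once the phase transition $S\to SS$ has occurred, and being simple it does not contribute to the count of non-simple matrices. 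This part of the simulation is identical to the context-free handling used in Theorems~\ref{thm:scm-634723} and~\ref{thm:scm-633-24}.

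The seven conditional matrices that carry out the phase transition and the simulation of the two non-context-free erasing rules $0\$0\to\$$ and $1\$1\to\$$ must then be redesigned to fit the degree $(7,2)$ and to keep at most three of them non-simple. The key idea is to compensate for the tightened forbidden-context bound (length at most~$2$) by lengthening the permitting contexts (up to length~$7$): a single permitting context of the form $0\,S^{k}\,0$ or $1\,S^{k}\,1$ with $k\le 5$ is rich enough to fingerprint each intermediate stage of a simulation step, so that essentially only one matrix can legitimately fire there. Wherever Theorem~\ref{thm:scm-633-24} used a forbidden context of length~$3$ (namely $SSS$, blocking the ``expansion'' matrices from re-firing), this role is now taken over by a sufficiently discriminating longer permitting context together with a length-$2$ forbidden context drawn from $\{0S,\,S0,\,1S,\,S1,\,SS\}$; the choice is guided by the rigid structure of sMMNF sentential forms, in which the $SS$-marker is always flanked by $\{10,100\}^*$ on the left and by $\{01,001\}^*\cup T^*$ on the right. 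This reshuffling is precisely what allows one of the non-simple matrices of Theorem~\ref{thm:scm-633-24} to become simple, lowering the count of non-simple matrices from four to three.

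The main obstacle is that the natural phase-2 marker~$SS$ is present throughout the simulation of every deletion rule, so~$SS$ itself cannot serve as a blocking forbidden context; one must instead rely on the immediate neighbours of~$SS$ and on the intermediate $S$-patterns introduced by the expansion matrices, all while keeping every forbidden substring to at most two symbols. Once a matrix set satisfying these constraints has been fixed, correctness follows by a straightforward induction on the length of a $G'$-derivation, exactly along the lines of the proofs of Theorems~\ref{thm:scm-634723} and~\ref{thm:scm-633-24}: every reachable sentential form of~$G'$ is either (i)~a sentential form of~$G$ (encoded via $SS$), or (ii)~an intermediate string containing marker patterns that can be cleared only by completing the current simulation step, or (iii)~a string from which no terminal word can be derived. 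Since only sentential forms of type~(i) can yield terminal strings, this gives $L(G')=L(G)$ and establishes the claimed descriptional complexity bounds.
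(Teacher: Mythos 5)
Your proposal stops exactly where the actual proof has to begin: you never exhibit the matrices. The entire content of Theorem~\ref{thm:scm-723-23} is an explicit family of conditional matrices meeting all five bounds simultaneously (degree $(7,2)$, three nonterminals, binary matrices, at most three non-simple ones), and you only assert that ``sufficiently discriminating'' long permitting contexts together with length-$2$ forbidden contexts from $\{0S,S0,1S,S1,SS\}$ can be found. Worse, the obstacle you yourself name---that the marker $SS$ is present throughout every deletion simulation and hence cannot act as a blocking forbidden string, while forbidden strings may now have length at most two---is left unresolved, and it is precisely this obstacle that makes the paper abandon the $SS$-encoding you want to keep. In the paper's construction the marker $\$$ is encoded as $S1S$ (phase transition $r1=[(S\to S1S),0S0,1S]$), and the stages of the simulations of $0\$0\to\$$ and $1\$1\to\$$ are fingerprinted by permitting contexts such as $0S1S0$, $1S1S1$, $SS1SSSS$, $\eta 1S1SSS$, $0S1SSS0$ combined with forbidden strings $1S$, $11$, $SS$; the interior $1$ of the marker is what keeps these patterns distinguishable with so few nonterminals. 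It is also what makes termination possible: your sketch never says how $\$\to\emptyword$ is simulated, whereas the paper needs the dedicated matrices $r7=[(S\to\emptyword),(S\to\emptyword),\emptyset,0]$ and $r8=[(1\to\emptyword),\emptyset,S]$, realizing $1S1S1\,t\Ra 111t\Ra^3 t$.

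Your final claim that correctness then ``follows by a straightforward induction exactly along the lines of'' Theorems~\ref{thm:scm-634723} and~\ref{thm:scm-633-24} also underestimates the converse inclusion $L(G')\subseteq L(G)$. Because the start symbol is re-used inside the simulation marker, one must exclude ``unintended restarts'' in which some $r_\gamma$ or $r1$ fires on an occurrence of $S$ created in the middle of a deletion simulation; the paper's argument is a lengthy case analysis devoted to exactly such spurious continuations (for instance, strings like $1SSSSSS\beta'$, which are only harmless because every $\gamma$ begins with $1$ and $r2$ forbids $11$). Without a concrete matrix set none of these cases can even be checked, so what you have is a plan for a proof, not a proof, and the one concrete design decision you do commit to (keeping $SS$ as the marker) is the one the paper had to change.
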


\begin{proof}
Once more, we start with a grammar~$G$ in sMMNF that generates an arbitrary given RE language  $L\subseteq T^*$. This means that $G$
a type-0 grammar of the form $G= (V, T, P\cup \{0\$0\to \$, 1\$1\to \$,\$\to\emptyword\}, S)$ with $L(G)=L$ such that $P$ contains only context-free rules and $N\coloneqq V\setminus T=\{S, 0, 1,\$\}$.
Any context-free rule $S\to\gamma$ of~$G$ is simulated by the simple conditional matrix $r_\gamma=[(S \to \gamma), \emptyset, SS]$. 
We suggest the following simulation rules for $0\$0\to\emptyword$ and for $1\$1\to\emptyword$ in the SCM grammar $G'$, with $\eta\in\{0,1\}$.

\begin{center}
\begin{minipage}{.47\textwidth}
\small
\begin{eqnarray*}
r1 &=& [(S\to S1S),0S0,1S]\\
r2 &=& [(0\to 11),(0\to SSSS11),0S1S0,11]\\
r3 &=& [(1\to S),(1\to SSS),1S1S1,SS]\\
r4 &=& [(S\to\emptyword),(S\to\emptyword),SS1SSSS,\emptyset]\end{eqnarray*}
\end{minipage}\quad 
\begin{minipage}{.47\textwidth}
\small
\begin{eqnarray*}
r5_\eta &=& [(S\to\emptyword),(S\to\emptyword),\eta 1S1SSS,\emptyset]\\
r6 &=& [(S\to\emptyword),(S\to\emptyword),0S1SSS0,\emptyset]\\
r7 &=& [(S\to\emptyword),(S\to\emptyword),\emptyset,0]\\
r8 &=& [(1\to\emptyword),\emptyset,S]
\end{eqnarray*}
\end{minipage}
\end{center}
The intended simulations are:
$$\alpha 0S1S0 \beta \Ra_{r2} \alpha 11S1SSSSS11\beta \Ra_{r5_1}^2\alpha 11S1S11 \beta, $$
$$\alpha 1S1S1 \beta \Ra_{r3} \alpha SS1SSSS\beta \Ra_{r4} \alpha S1SSS\beta \Ra_{r5_0/6}\alpha S1S \beta, $$
(with $\alpha,\beta\neq\emptyword$, with $\alpha$ ending with the same symbol as $\beta$ starts) and
$$1S1S1 t \Ra_{r7}111t\Ra_{r8}^3t\text{ for } t\in T^*\,. $$
From these three cases, $L(G)\subseteq L(G')$ follows easily by induction.
\SV{The converse direction is unfortunately a quite intricate case analysis within an inductive argument and is hence deferred to the long version of this paper.}
\begin{toappendix}
\SV{\subsection{Proof of the Inlcusion $L(G')\subseteq L(G)$ in Theorem~\ref{thm:scm-723-23}}}
Conversely, consider some sentential form $w$ that we assume to be derivable by $G$, i.e., $S\Ra_G^* w$, such that there is some sentential form $w'$ derivable by $G'$, i.e.,  $S\Ra_{G'}^* w'$, where $w'=w$ if $w$ contains an occurrence of~$S$ (and then exactly one), while $w'=\alpha S1S\beta$ if $w=\alpha \$\beta$.
We are going to prove (inductively) that any sentential form~$v'$ derivable from $w'$ in $G'$ and being of the form $v'=\delta'S1S\eta'$ for some $\delta'\in\{0,1\}^*$ and $\eta'\in (T\cup\{0,1\})^*$ either has a counterpart~$v$ as described above for the pair $(w,w')$ such that $S\Ra^*_Gv$, or $w'\Ra_{G'}t$ is invalid for any terminal string~$t\in T^*$. This claim then proves that $L(G')\subseteq L(G)$ as $S$ itself gives the induction basis for this claim.

Next, consider $w'=w=\alpha S\beta\in L_{\text{sMM}}$, i.e., $\alpha\in \{100,10\}^*$, $\beta\in \{001,01\}^*$ according to Eq.~\eqref{eq:SF-sMMNF}.
Hence, neither $1S$ nor $S1$ are substrings of $w'$, enabling possible derivations using $r_\gamma=[(S \to \gamma), \emptyset, S1]$ or $r1$. All other matrices either require at least two occurrences of~$S$ or none, so they cannot be applied on~$w'$. As argued in the proof of Theorem~\ref{thm:scm-634723}, we can assume $\gamma=u_1Su_2$ for some $u_1\in\{100,10\}^+$ and $u_2\in T\cup \{001,01\}^*$. Now if $w'\Ra_{r_\gamma}v'$, then choose $v=v'$ and the context-free rule $S\to\gamma$ of~$G$ to see why $w\Ra_G v$.
If $\alpha=\emptyword$, this is in fact the only possibility.
If $\alpha\neq \emptyword$, then $r1$ might be applicable if $\beta\neq\emptyword$. Namely, by the structure of our encoding, $\alpha$ will end with~0 and $\beta$ can start with~0. The resulting string $v'=\alpha S1S\beta$ corresponds to the string $v=\alpha \$\beta $. Now, as $\alpha\neq\emptyword$, $w$ was obtained from a sentential form $x$ by applying a rule $S\to u_1'Su_2'$. With this rule, there is also a rule $S\to u_1'\$u_2' $ in~$G$. When we apply this on~$x$, we obtain~$v$, hence proving the claim in this case.

Now, consider $w'=\alpha S1S \beta$ derivable in~$G'$ such that we have a $w=\alpha\$ \beta$ derivable in~$G$.
This means that $\alpha$ is a word over $\{0,1\}$ and $\beta$ is a word over $T\cup\{0,1\}$. By the presence of the substring $S1S$, none of the simple conditional matrices of~$G'$ of length~1 is applicable. 

If $\alpha\in\{1\}^*$ and $\beta\in(\{1\}\cup T)^*$, then (as $w\in L_{\text{sMM}}$) $\alpha=\beta=\emptyword$ or $\alpha=1$ and $\beta\in\{1\}T^*$.
We could apply $r7$, followed by $r8$ (possibly multiple times) to arrive at a terminal string (suffix of~$\beta$) as intended.
If $\alpha=\beta=1$ (or more generally, $\beta$ starts with~1, i.e, $\beta=1\beta'$), then we could also apply $r3$, leading to 
(a) $v'=SSSSSS1\beta'$, (b) $v'= SSS1SSS\beta'$, or (c) $v'=1SSSSSS\beta'$.
In cases (a) and (b), no matrix is applicable, the derivation is stuck. In case (c), only the matrices $r_\gamma$ are applicable. If any other but the first occurrence of $S$ is used for some $r_\gamma$, then the derivation will be stuck again, because $\gamma$ always starts with a~1, so that we get the substring $S1$. After unintendedly restarting the simulation on the first~$S$, we can then (at some point) start deriving something from the second~$S$ etc.
But notice that any sentential form derived in this way will start with the prefix~$11$.
At some point, to make any progress, we have to apply $r1$. As this checks for the permitting context $0S0$, we know that next, we have to apply $r2$, which is impossible due to the forbidden context~$11$.
Hence, unintended re-starts are not possible.

After having looked into this very particular case, consider $\alpha=\alpha'1$ with $\alpha'\neq\emptyword$. This means that $\alpha$ contains some occurrences of~0, so that $r7$ is not applicable. As $w\in L_{\text{sMM}}$, we can assume that $\alpha'=\alpha''0$. The only applicable matrix is now $r3$, enforcing $\beta=1\beta'$ by the permitting context. Hence, $w'\Ra_{r3}w_1'$ by 
replacing one occurrence of~1 in $w'$ by $S$ and another one by~$SSS$.
The permitting context conditions of the matrices allow only one possible continuation, namely with $r4$. 
If $r4$ is applicable, then $w_1'=\alpha''0SS1SSSS\beta'$. Again in particular due to the permitting context conditions, with $w_1'\Ra_{G'} w_2'$, we know that we have to delete one occurrence of~$S$ per $S$-block in $w_1'$ when applying~$r4$ in order to continue the derivation. Hence, $w_2'=\alpha''0S1SSS\beta'$. Now, $r6$ is enforced, so that $\beta'=0\beta''$ is clear. If we now restart on $\alpha''01SS0\beta''$, then with arguments similar to our previous considerations, we see that the derivation will get stuck.
Therefore,  $w_3'=\alpha''0S1S0\beta''$ is enforced. Now, observe that from~$w'$ to~$w_3'$, the central $1S1S1$ was changed into $S1S$. If we now look at~$v$ obtained from~$w$ by applying $1\$1\to\$ $, then the counterpart of~$v$ within $G'$ is  obviously $v'=w_3'.$ This shows the inductive claim in this case.

Finally, reconsider $w'=\alpha S1S \beta$ derivable in~$G'$ such that we have a $w=\alpha\$ \beta$ derivable in~$G$. According to the previous case analysis, we can assume that $\alpha=\alpha'0$. The context conditions leave $r2$ as the only possibility of any matrix to apply on~$w'$. Hence, $\beta=0\beta'$. We have to replace one occurrence of~0 by $11$ and another one by the long string $SSSS11$, yielding $w_1'$. How can we ever continue with this long substring?
Some case analysis shows that we have to apply $r5_1$ now. Even more, we know that $w_1'=\alpha'11S1S SSSS11\beta'$. Now, deleting two occurrences of $S$ could lead to $\alpha'111 SSSS11\beta'$ and possibly allow a re-start as the only way to continue. However, the substring $S1$ immediately also blocks this attempt. Hence, the next string in the derivation must be  $w_2'=\alpha'11S1SSS11\beta'$. With similar considerations, we must re-apply $r5_1$ to get $w_3'=\alpha'11S1S11\beta'$.\footnote{At this point, the reader may ask why we did not use the rule $0\to SS11$ in matrix $r3$. But then, we could use $0\to 11$ at a wrong place and still continue with $r6$.}
 This string $w_3'$ does not correspond to any string derivable in~$G$ as above.
 Rather, $w_3'$ can be thought of being derived from $w'$ by replacing $0S1S0$ by $11S1S11$. Yet, the idea should be clear: we can now continue ``as above''.
 Yet, there some differences in this argument, so that we rather repeat it here to make the reader aware of the arguments.
 First, it might well be that $0\notin sub(w_3')$, so that $r7$ is applicable on~$w_3'$, followed by seven applications of $r8$ to derive a terminal word~$t$.
 What does this mean in~$G$? Clearly, $w=10\$01 t$ must then be true, as only then $w_3'=111S1S111t$. But this is clearly perfectly fine. 
 But in any case, we can also apply $r3$ to $w_3'$ and then, arguing by cases, we should apply $r4$, so that
 $$w_3'=\alpha'11S1S11\beta'\Ra_{r3}w_4'=\alpha'1SS1SSSS1\beta'\Ra_{r4}w_5'=\alpha'1S1SSS1\beta'\,.$$
Now, by the encoding that has led to~$w$ within~$G$, we know that $\alpha'\neq\emptyword$, i.e., it ends either with~$0$ or with~$1$. 
Depending on the case, $r5_0$ or $r5_1$ have to be applied next to yield 
$w_6'=\alpha'1S1S1\beta'$, again with short discussions about impossible restarts should we (instead) delete the leftmost occurrence of~$S$ in~$w_5'$. 

Now, we have to discuss the `next round'. But, depending on whether $\alpha'$ ends with~$0$ or~$1$ we either are in one of the two cases we already discussed, or we detect that $\alpha'$ ends on a symbol that is not a prefix of $\beta'$ and get stuck. But in such a case, also the derivation starting from~$w$ in~$G$ would be stuck. If the last letter in~$\alpha'$ matches the first one of $\beta'$, we will see the derivation $w_6'\Ra_{G'} w_7'\Ra_{G'}w_8'\Ra_{G'}w_9'=\alpha'S1S\beta'$. Clearly, $w_9'$ corrsponds to $v=\alpha'\$\beta' $ in~$G$, which is derivable from $w$ in one step using $0\$0\to\$ $.

Inductively, this reasoning shows $L(G')\subseteq L(G)$.
\end{toappendix}
\qed 
\end{proof}
If one likes to highlight this, we can also deduce from Theorem~\ref{thm:scm-633-24} that is actually slightly better than Theorem~\ref{thm:scm-723-23} in the ``long-matrices-count'':

\begin{corollary}
For each RE language~$L$, there is a SCM grammar generating~$L$ with only three nonterminals and only six matrices of length two; all other matrices have length~1.
\end{corollary}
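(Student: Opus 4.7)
The plan is to derive the corollary as an immediate structural observation about the construction already given in the proof of Theorem~\ref{thm:scm-633-24}, rather than by producing any new simulation. First I would recall that in that construction, for each $\mathrm{RE}$ language~$L$, we started from an sMMNF grammar and built an SCM grammar~$G'$ whose nonterminal set is exactly $N=\{S,0,1\}$, i.e., three nonterminals, as the symbol~$\$$ of sMMNF is encoded by the substring~$SS$ and no fresh nonterminal~$\#$ is introduced.

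Next I would classify the matrices of~$G'$ by length. The matrices of~$G'$ fall into two groups: (i) the simple conditional length-$1$ matrices $r_\gamma=[(S\to\gamma),\emptyset,SS]$, one for every context-free rule $S\to\gamma$ of the underlying sMMNF grammar, together with the single additional length-$1$ matrix $r1=[(S\to SS),0S,SS]$; and (ii) the six length-$2$ matrices $r2,r3,r4,r5,r6,r7$ used to simulate the non-context-free erasing rules $0\$0\to\$$, $1\$1\to\$$ and the rule $\$\to\emptyword$ of sMMNF. A direct inspection of the table of matrices in the proof of Theorem~\ref{thm:scm-633-24} confirms that this classification is exhaustive.

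Counting yields precisely six matrices of length two; all remaining matrices (the $r_\gamma$ family and $r1$) have length one. Since correctness of~$G'$ with respect to~$L$ has already been established in Theorem~\ref{thm:scm-633-24}, the corollary follows without any further argument. The only thing to watch for, which is routine, is to make sure that no additional non-simple or long matrix is hidden among the $r_\gamma$'s (their length is one by definition) and that $r1$ is indeed length one (it contains a single rule). I do not anticipate any obstacle here, as this is purely a bookkeeping corollary refining the numerical content of Theorem~\ref{thm:scm-633-24}.
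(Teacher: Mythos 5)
Your proposal is correct and matches the intended derivation: the corollary is read off directly from the construction in Theorem~\ref{thm:scm-633-24}, whose nonterminals are exactly $S,0,1$ and whose only length-two matrices are the six matrices $r2,\dots,r7$, while $r1$ and all the matrices $r_\gamma$ simulating context-free rules have length one. Nothing further is needed.
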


Our last result concerning the theme telling that sometimes, three nonterminals are enough to generate all RE languages, is the following one. Its proof is significantly different from the ones previously presented in this paper, as it uses a simulation of graph-controlled grammars with only two nonterminals $A$ and~$B$\SV{ (not formally introduced in this paper so far). Thus, we the whole argument can be found in the long version}. The argument itself is a non-trivial adaptation of the proof of Cor.~6 in~\cite{Feretal07}. More precisely, in the simulation of the given graph-controlled grammar, we use a third nonterminal~$C$ to encode the current vertex (state, viewed as a number) of the control graph explicitly in unary in the sentential form. We can use a (long) sequence of rules $C\to\emptyword$ in a matrix to lower-bound the state number and the (long) forbidden context together to make sure that a state transition together with a successful rule application are properly simulated. The failure case is more tricky, as we have to split the tests (absence of left-hand side and upper bound on state number) over two matrices.
\begin{toappendix}
\SV{\subsection{An Argument Based on Graph-controlled Grammars}}
We now describe graph-controlled grammars slightly less formal than other grammars, rather from a graph-theoretic perspective.
Recall that a graph-controlled grammar $G$ possesses an underlying control graph with bi-colored directed edges (green and red). Alternatively, this can be viewed as a directed signed graph.
A context-free rule associated to each node of this graph. When processing a sentential form, we also have to store the current node in a configuration of the grammar. So, if $w$ is the current string and the processing is in node $n$, with associated rule $Y\to \alpha$, there are two cases that could happen:
\begin{itemize}
    \item either $Y$ occurs in $w$, in which case an occurrence of $Y$ in $w$ is replaced  by $\alpha$, yielding a new sentential form $w'$; moreover, we choose a green arc leading from $n$ to some node $n'$, so that $(n',w')$ is the new configuration of the grammar;
    \item or $Y$ does not occur in $w$, in which case $w$ is not changed, i.e., $w'=w$, but we choose a red arc leading from $n$ to some node $n'$, so that $(n',w')$ is the new configuration of the grammar.
\end{itemize}

For simplicity, rules are written as $(\ell:Y\to \alpha,\sigma(\ell),\varphi(\ell))$,
where $\ell$ is the node name, $Y\to \alpha$ is the context-free rule associated to~$\ell$, $\sigma(\ell)$ collects all nodes that can be reached by using green arcs from~$\ell$, while $\varphi(\ell)$ collects all nodes that can be reached by using red arcs from~$\ell$.

Moreover, there is a designated initial node~$\mathfrak{i}$ and a designated final node~$\mathfrak{f}$;
with $S$ being the start symbol, $(S,\mathfrak{i})$ is the initial configuration, while $(w,\mathfrak{f})$ is a final configuration if $w$ is a terminal string. 
The language $L(G)$ collects all terminal strings $w$ that show up in final configurations $(w,\mathfrak{f})$ that can be reached by a finite number of steps from the initial configuration as described.
\end{toappendix}

\begin{theoremrep}\label{thm:sscm-0-3--}
$\mathrm{SSCM}(0,*;3;*,*)=\mathrm{RE}$.    
\end{theoremrep}

\begin{proof}
Consider an arbitrary recursively enumerable language $L\subseteq T^*$.
By \cite[Thm. 4]{Feretal07}, there is a graph-controlled grammar $G_C=(\{A,B\},T,(R,\{\mathfrak{i}\},\{\mathfrak{f}\}),A)$ with only two nonterminals $A,B$ and $\mathfrak{i}\neq \mathfrak{f}$ that generates~$L$. Let us assume that the control graph of $G_C$ has~$v$ vertices.
Let $C$ be a new nonterminal, which is also the start symbol of the SSCM grammar $G'$ that we now describe. In our simulation, a subword $C^k$ refers to the vertex $k-1$ of $G_C$ as part of the configuration of~$G_C$ that corresponds to the current sentential form of~$G'$ in the simulation.
We assume $\mathfrak{i}=1$ for the initial vertex of~$G_C$ and $\mathfrak{f}=v$ for the final vertex. As $A$ is the start symbol of~$G_C$, this corresponds to the sentential form $CCA$ of $G'$ that is produced by applying the start matrix to~$C$.  We are adding explanations to the different matrices in order to also show the correctness of the simulation.
For the ease of presentation, let $h$ be the morphism that keeps all terminal symbols and the nonterminal symbols $A,B$ but deletes the (new) nonterminal symbol~$C$. This morphism allows to translate sentential forms of~$G'$ into sentential forms of $G_C$. Likewise, if $C^k$ is a substring of a sentential form~$w$ of $G'$, then $\text{state}(w)$ denotes $k-1$, which should be the state (vertex) of $G_C$ in the current processing. In other words, $w\mapsto (\text{state}(w),h(w))$ should map a sentential form of $G'$ into a corresponding configuration of $G_C$. Even more, we will (always) have $w=C^{\text{state}(w)+1}h(w)$ except at the very beginning, where $w=C$ is the initial sentential form of~$G'$, and at the end, when $w=h(w)\in T^*$.
\begin{itemize}
    \item As a start matrix, we take $$m_{\text{init}}\coloneqq [(C\to CCA),\emptyset,CC]\,.$$
    Notice that $(\text{state}(CCA),h(CCA))=(1,A)$ is the initial configuration of~$G_C$ as  $\mathfrak{i}=1$. Moreover, $CCA=C^{\text{state}(CCA)+1}h(CCA)$ to show that this invariant holds initially. Also, the forbidden context $CC$ will disallow any re-application of $m_{\text{init}}$, because $CC$ will be a substring of any sentential form that is derivable from $CCA$ in $G'$ unless this sentential form is a terminal string. We will maintain this invariant in the following explanations. 
    \item Assume $(\ell:Y_\ell\to\alpha_\ell,\sigma(\ell),\varphi(\ell))$ is a rule of $G_C$ that we want to simulate,  where $Y_\ell\in \{A,B\}$ and $1\leq \ell\leq v$. 
    This rule is simulated in two ways in $G'$, differentiating mainly between the success and the failure case in the course of a derivation of $G_C$.
    \begin{itemize}
        \item In the success case, consider $s\in \sigma(\ell)$ and the matrix
        $$m_{\ell\to s}^\sigma\coloneqq [(\langle C\to\emptyword,\rangle^\ell, C\to C^{s+1},Y_\ell\to \alpha_\ell),\emptyset,C^{\ell+2}]\,.$$
        This notation refers to a matrix of length $\ell+2$; the first $\ell$ rules are deletion rules $C\to\emptyword$. If we can successfully apply the first $\ell+1$ rules, we have also checked that there have been at least $\ell+1$ occurrences of~$C$ in the current sentential form. By the forbidden string $C^{\ell+2}$, we also know that there have been at most  $\ell+1$ occurrences of~$C$. Hence, we can only apply the matrix successfully to~$w$ if there are exactly $\ell+1$ occurrences of~$C$ within~$w$ and if there is at least one occurrence of  $Y_\ell\in \{A,B\}$ in~$w$. This reflects that, before applying $m_{\ell\to s}^\sigma$, the simulated grammar was in the state (vertex) $\ell$ and can apply successfully the rule $Y_\ell\to\alpha_\ell$ on the sentential form $h(w)$. Observe that $(\ell,h(w))\Ra_{G_C}(s,u)$ if and only if $w\Ra_{m_{\ell\to s }^\sigma}w'$ with $(\text{state}(w'),h(w'))=(s,u)$. Moreover, if $w=C^{\text{state}(w)+1}h(w)=C^{\ell+1}h(w)$ as an induction hypothesis, $w'=C^{\text{state}(w')+1}h(w')=C^{s+1}u$ which shows that this invariant holds in the induction step for this case. Also, $w'$ contains at least two occurrences of~$C$.
        \item In the failure case, consider  $f\in \varphi(\ell)$ (this will become important only in the next item) and the matrix
        $$m_{\ell}^{\varphi,1}\coloneqq[(C\to\emptyword,C\to C^{g(\ell)+1}),\emptyset,Y_\ell]\,.$$
        This is an intermediate step and moves the number of occurrences of~$C$ out of the usual range by adding $g(\ell)\coloneqq v\ell$ additional $C$'s. The only purpose of this is to check that $Y_\ell$ is not present (hence verifying the failure case in $G_C$) and, by having now more than $v+2$ occurrences of~$C$ disables further applications of any $m_{\ell'\to s'}^\sigma$. We block applying $m_{\ell}^{\varphi,1}$ as the very first matrix by checking that there are indeed at least two occurrences of~$C$ in the current sentential form; this is why we cannot replace the two rules of the matrix by the single rule $C\to C^{g(\ell)}$. In total, if $w$ does not contain $Y_\ell$, $w=C^{k+1}u\Ra_{m_{\ell}^{\varphi,1}} C^{(k+1)+g(\ell)}u$, where $u\in \{A\}^*$ if $Y_\ell=B$ and $u\in \{B\}^*$ if $Y_\ell=A$. Of course, we want to see $k=\ell$, but this is checked only in the next step. Notice that we may apply $m_{\ell}^{\varphi,1}$ multiple times, but no matrix will be able to cope with that many occurrences of $C$'s.
        \item To complete the simulation in the failure case, 
         $$m_{\ell\to f}^{\varphi,2}\coloneqq [(\langle C\to\emptyword,\rangle^{\ell+g(\ell)}, C\to C^{f+1}),\emptyset,C^{\ell+g(\ell)+2}]\,.$$
         This is a matrix of length $\ell+g(\ell)$; this is also the minimum number of occurrences of~$C$'s that we have to find in the current sentential form~$w$. By the forbidden context, we find the subword $C^{\ell+g(\ell)+1}$ in~$w$. This subword (which is actually a prefix of the sentential form due to our inductively proven invariant) is changed into $C^{f+1}$ for $f\in \varphi(\ell)$. Now consider again $w=C^{k+1}u\Ra_{m_{\ell}^{\varphi,1}} C^{(k+1)+g(\ell)}u\Ra_{m_{\ell'}^{\varphi,2}} C^{f+1}u$ for some $f\in \varphi(\ell')$ that enforces $k=\ell=\ell'$ and hence simulates $(\ell,u)\Ra_{G_C}(m,u)$ in $G_C$, also maintaining the other invariants. We already explained our choice of $g(\ell)$ to avoid repeated applications of $m_{\ell}^{\varphi,1}$. Also, observe that if we apply $m_{\ell}^{\varphi,1}$ to some sentential form $C^{k+1}w$  with $k\neq \ell$, then for no $\ell'$, $k+g(\ell)=\ell'+g(\ell')$ is possible as  $k+g(\ell)=k+\ell v $ can equal $\ell'+\ell'v $ only if $k+\ell v= \ell'(v+1)$ and $1\leq k,\ell,\ell'\leq v$. As $k,\ell\leq v$, $k+\ell v$ cannot be a multiple of $v+1$, unless $k=\ell=\ell'$.
         \item Recall that $\mathfrak{f}=v$ is the final vertex. When having reached a final configuration $(\mathfrak{f},w)$, the graph-controlled grammar will collect $w$ into its language if $w\in T^*$. This is the final step that we have to simulate in $G'$. 
         This is done with the following matrix.
         $$m_{\text{final}}\coloneqq [(\langle C\to\emptyword,\rangle^{\mathfrak{f}}C\to\emptyword),\emptyset,C^{\mathfrak{f}+2}]$$
         This matrix of length $v+1$ simply deletes the $C$-prefix for the  configuration $(\mathfrak{f},w)$. When succussful, i.e., after $w'=C^{\mathfrak{f}+1}w\Ra_{m_{\text{final}}}w$, the string~$w$ can only contain terminal symbols or $A$ or $B$. But in $G'$, all matrices assume the presence of at least one~$C$. Therefore, such a derivation step can only contribute to the generated language if $w\in T^*$. This shows the correctness  of this final matrix.
    \end{itemize}
\end{itemize}
Finally, notice that if the control graph of the originally given graph-controlled grammar $G_C$ has $v$ many vertices, then the SSCM grammar~$G'$ has a polynomial number of matrices of length polynomial in~$v$. This explains why we could not limit the corresponding descriptional complexity parameters.   
\qed
\end{proof}

\section{Conclusions}

In this paper, we have tried to delineate the Pareto frontier of descriptional complexity for SCM grammars concerning a number of descriptional complexity parameters.
The most natural question here is to complement these results by lower bounds.
Or, can we further lower the upper bounds derived in this paper?
Should we (also) consider alternative parameters, like the number of matrices that actually match the maximum matrix length? We have discussed this through the paper a bit already, but are far from a systematic study.

It is also interesting to observe that in none of the simulations that we propose (but the last one), the sequence in which the matrix rules are applied within a matrix matters. In other words, literally the same descriptional complexity results hold for SCUM grammars, where UM (not formally introduced) should be read as unordered matrix, a model introduced by Cremers and Mayer in~\cite{CreMay73}.
\begin{toappendix}
Yet, generally speaking, unordered matrix grammars look a bit `weaker' than matrix grammars, giving up some control over the derivations that we could not exploit so far. This also may be an indication that some of our results are still not optimal with respect to descriptional complexity. Yet, all this needs more studies and insights. In particular, it is unclear to us if a result similar to Theorem~\ref{thm:sscm-0-3--} will also hold for SCUM grammars.
\end{toappendix}

\bibliographystyle{plain}
\bibliography{ab,hen}

\end{document}